\mathchardef\mhyphen="2D
\algrenewcommand{\algorithmicrequire}{\textbf{Input:}}
\algrenewcommand{\algorithmicensure}{\textbf{Output:}}
\date{}
\def\@citex[#1]#2{\if@filesw\immediate\write\@auxout{\string\citation{#2}}\fi
  \def\@citea{}\@cite{\@for\@citeb:=#2\do
    {\@citea\def\@citea{,\linebreak[0]\hskip0pt plus .2em}%
      \@ifundefined{b@\@citeb}%
    {{\bf ?}\@warning{Citation `\@citeb' on page \thepage\space undefined}}%
      \hbox{\csname b@\@citeb\endcsname}}}{#1}}
\newtheorem{theorem}{Theorem}[section]
\newtheorem{definition}[theorem]{Definition}
\newtheorem{rule-def}[theorem]{Rule}
\numberwithin{equation}{section} \makeatletter
\algrenewcommand{\algorithmicrequire}{\textbf{Input:}}
\algrenewcommand{\algorithmicensure}{\textbf{Output:}}
\newsavebox{\savepar}
\newenvironment{boxit}{\begin{lrbox}{\savepar}
\begin{minipage}[b]{6.2in}} {\end{minipage}\end{lrbox}\fbox{\usebox{\savepar}}}
\begin{document}
\title{UC Secure Issuer-Free Adaptive Oblivious Transfer with Hidden Access Policy}
\author{Vandana Guleria\thanks{Corresponding
author. E-mail:
vandana.math@gmail.com},~ Ratna Dutta\\
$\rm $\small {Department of Mathematics} \\
\small {Indian Institute of Technology Kharagpur}\\
\small {Kharagpur -721302, India}}
\maketitle{}
\begin{abstract}
Privacy is a major concern in designing any cryptographic primitive when frequent transactions are done electronically. During electronic transactions, people reveal their personal data into several servers and believe that this information does not leak too much about them. The\emph{ adaptive oblivious transfer with hidden access policy (${\sf AOT\mhyphen HAP}$)} takes measure against such privacy issues. The existing ${\sf AOT\mhyphen HAP}$ involves a sender and multiple receivers apart from a designated issuer. Security of these schemes rely on the fact that the issuer cannot collude with a set of receivers. Moreover, they loose security when run with multiple protocol instances during concurrent execution. We present the \emph{first issuer-free ${\sf AOT\mhyphen HAP}$} in \emph{universal composable (UC) framework} in which the protocol is secure even when composed with each other or with other protocols. A concrete security analysis is given assuming the hardness of $q$-strong Diffie-Hellman (SDH), decision Linear (DLIN) and decision bilinear Diffie-Hellman (DBDH) problems against malicious adversary in UC model. Moreover, the protocol outperforms the existing similar schemes.\\\\

\noindent \textbf{Keywords} Oblivious transfer, attribute based encryption, hidden access policy, non-interactive zero-knowledge proofs, universal composable security.
\end{abstract}

\section{Introduction}
To protect sensitive data, it is required to associate some access policies to the data so that receivers who have the necessary permissions can access it. Receiver should possess attribute sets in order to retrieve the data. Upon controlling access to the sensitive data, the associated access policy leaks too much information. For instance, consider a scenario of a medical database consisting of patients records. The access policy associated with each patient's record includes patient's identity, names of treating doctors, names of diseases, medicines recommended, treatments done etc. If access policies are public, then they reveal about the patient's personal information. Many patients do not want to disclose their identities when they undergo a plastic surgical operation, a psychiatric treatment etc. Sometimes, even doctors want to remain anonymous while treating a famous person. The solution to such privacy problems is \emph{adaptive oblivious transfer realizing hidden access policy (${\sf AOT\mhyphen HAP}$)}.

\noindent The ${\sf AOT\mhyphen HAP}$ involves a sender, multiple receivers and an issuer. The sender carries a database with $N$ messages and generates a ciphertext database. The access policies are embedded in ciphertexts and are kept veiled. Only those receivers who have the necessary attributes can decipher the ciphertext using secret key issued by the issuer corresponding to its attribute set. A receiver decode the message correctly if his attribute set satisfies the access structure embedded in the ciphertext. Otherwise, a wrong message is retrieved by the receiver. In an adaptive $k\mhyphen {\rm out}\mhyphen {\rm of}\mhyphen N$ oblivious transfer, each receiver can recover only $k$ messages and remains oblivious to other $N-k$ messages. The term adaptive means, a receiver recovers one message in each of $k$ transactions. The ${\sf AOT\mhyphen HAP}$ consists of three phases -- initialization, issue and transfer. The ciphertext database is generated at initialization phase. In issue phase, the secret key is generated corresponding to a receiver's attribute set. Using secret key, a receiver contacts with the sender to decode the ciphertext in transfer phase. Initialization phase is executed once at the outset of the protocol. Issue phase is also executed once between the issuer and a receiver. On the other hand, transfer phase is run $k$ times between a receiver and the sender.

\noindent  Rabin \cite{rabin1981exchange} introduced the first oblivious transfer which was subsequently followed by Brassard {\em et al.} \cite{brassard1987all}  and Even {\em et al.} \cite{even1985randomized}
to construct secure
protocols for multiparty computation. Tauman
\cite{kalai2005smooth} proposed an efficient contruction of oblivious transfer protocol
using projective hash framework of Cramer and Shoup
\cite{cramer2002universal}.
Naor and Pinkas \cite{naor2001efficient} generalized the concept of Tauman.  A wide variety of oblivious transfer protocols are available in the literature
\cite{aiello2001priced,
kushilevitz1997replication, tassa2011generalized, lindell2008efficient, peikert2008framework, damgaard2009essentially, bresson2003simple, abdallasphf}.

\noindent The first ${\sf AOT}$ was introduced by Naor and Pinkas \cite{naor1999oblivious} consequently followed by \cite{camenisch2007simulatable}, \cite{green2007blind}. Aforementioned ${\sf AOT}$ do not restrict the receivers to access the content of the database. Coull {\em et al.} \cite{coull2009controlling} presented the first ${\sf AOT}$ with access policy which was subsequently generalized by Camenisch {\em et al.} \cite{camenisch2009oblivious} and Zhang {\em et
al.} \cite{zhang2010oblivious}.  Recently, Guleria and Dutta \cite{guleria2015universally} presented the first issuer-free ${\sf AOT}$ with access policy in universal composable (UC) framework. However, the access policies are public in
\cite{camenisch2009oblivious}, \cite{coull2009controlling}, \cite{guleria2015universally} and
\cite{zhang2010oblivious}. To overcome this, Camenisch {\em et al,}  \cite{camenisch2012oblivious}, \cite{camenisch2011oblivious} and Guleria and Dutta \cite{guleria2014adaptive} introduced {\sf AOT-HAP} which are
the only oblivious transfer protocols
with hidden access policies. The security framework adapted in \cite{naor1999oblivious} is half-simulation model while that in \cite{camenisch2007simulatable}, \cite{green2007blind}, \cite{coull2009controlling}, \cite{camenisch2009oblivious}, \cite{zhang2010oblivious},  \cite{camenisch2012oblivious}, \cite{camenisch2011oblivious} and \cite{guleria2014adaptive} is full-simulation model. In the half-simulation model, the security of one party is examined by comparing the output of the real world against the turnout of the ideal world, while the security of the other party is examined by an argument. The full-simulation model addresses the security of the sender and a receiver in real/ideal world paradigm. None of these models support concurrent execution. Abe {\em et al.} \cite{abe2013universally} proposed the first UC secure ${\sf AOT}$ with access policy. The UC secure framework follows real/ideal world paradigm and also remains secure even when composed with each other or with the instances of other protocols.

\noindent \textbf{Our Contribution.} The aforesaid ${\sf AOT\mhyphen HAP}$ protocols do not consider collusion of an issuer with a collection of receivers. If the issuer colludes with any receiver, the receiver can decipher the ciphertext database by getting private keys for all the attributes in the universe of attributes, hence, making these protocols insecure. Besides, the existing ${\sf AOT\mhyphen HAP}$ protocols \cite{camenisch2012oblivious}, \cite{camenisch2011oblivious}, \cite{guleria2014adaptive} are secure in full-simulation model which are vulnerable to practical attacks against concurrent execution. Designs without such constraints are more desirable in practice from both security and efficiency point of view. In this paper, we concentrate on designing ${\sf AOT\mhyphen HAP}$ by eliminating these limitations. Our contribution in this paper is--

\begin{itemize}
\vspace{-.2cm}
\item firstly to contruct an issuer-free ${\sf AOT\mhyphen HAP}$,\vspace{-.2cm}
\item secondly to prove security in UC framework so that it does not loose security when composed with each other or with other protocols.
\vspace{-.2cm}
\end{itemize}

\noindent Our issuer-free $k\mhyphen {\rm out}\mhyphen {\rm of}\mhyphen N$ ${\sf AOT\mhyphen HAP}$ is executed between a sender and multiple receivers. We combine Boneh-Boyen (BB) \cite{boneh2004shortsignature} signature and Nishide {\em et al.} \cite{nishide2008attribute} ciphertext policy attribute based encryption (CP-ABE). Groth-Sahai proofs \cite{groth2008efficient} are employed for non-interactive verification of pairing product equations. The sender holds a database of $N$ messages, and each receiver carries some attributes. Each message is associated with an access policy. The sender generates ciphertext database in initialization phase by encrypting each message using CP-ABE under its access policy. The ciphertext database is made public, but the access policies are kept hidden. Subsequently, the protocol makes a motion to issue phase, in which a receiver interacts with the sender to get private key corresponding to its attribute set without leaking any information about its attributes to the sender. Finally, the receiver invokes transfer phase to recover the message. The receiver can retrieve at most $k$ messages, one in each transfer phase in our $k\mhyphen {\rm out}\mhyphen {\rm of}\mhyphen N$ ${\sf AOT\mhyphen HAP}$.

\noindent The security of our ${\sf AOT\mhyphen HAP}$ is analyzed in UC framework in static corruption model in which, the adversary pre-decides which party to corrupt at the beginning of the protocol. This allows security against individual malicious receiver as well as malicious sender. The malicious adversary do not follow the protocol specifications and its malicious behaviour is controlled by providing non-interactive zero-knowledge \cite{groth2008efficient} proofs. We use Groth-Sahai non-interactive witness indistinguishability (${\sf NIWI}$) and non-interactive zero-knowledge (${\sf NIZK}$) proofs to detect the malicious activities of a receiver and the sender, respectively. The proposed ${\sf AOT\mhyphen HAP}$ provides the following security guarantees --

\begin{itemize}
\vspace{-.2cm}
\item the sender does not know who queries a message, which message is being queried and what is the access policy of the message,\vspace{-.2cm}
\item the receiver learns one message per query,\vspace{-.2cm}
\item the receiver recovers the message correctly if its attribute set qualifies the access policy associated with the message,\vspace{-.2cm}
\item the receiver does not learn anything about the access policy of the message.\vspace{-.2cm}
\end{itemize}

\noindent The proposed ${\sf AOT\mhyphen HAP}$ is UC secure assuming $q$-strong Diffie-Hellman (SDH), decision Linear (DLIN) and decision bilinear Diffie-Hellman (DBDH) assumptions in the presence of malicious adversary.

\noindent We compare the proposed ${\sf AOT\mhyphen HAP}$ with the existing similar schemes, \cite{camenisch2012oblivious}, \cite{camenisch2011oblivious} and \cite{guleria2014adaptive}, which are the only schemes available in literature. The ${\sf AOT\mhyphen HAP}$ \cite{camenisch2012oblivious}, \cite{camenisch2011oblivious} and \cite{guleria2014adaptive} assume an issuer and are secure in full-simulation model. The ${\sf AOT}$ scheme of \cite{abe2013universally} is secure in UC framework. However, it invokes an issuer and access policies are public. In contrast, our proposed ${\sf AOT\mhyphen HAP}$ relaxing the need of an issuer and is secure in UC framework. In \cite{abe2013universally}, \cite{camenisch2012oblivious}, \cite{camenisch2011oblivious} and \cite{guleria2014adaptive}, an issuer is included to certify whether each of the receiver possesses the claimed attributes. In such protocols, when a receiver sends an attribute to the ideal functionality, the functionality passes the attribute and receiver identity to the issuer. The issuer informs the functionality whether the attribute should be rejected or accepted. In our protocol, when a receiver sends an attribute to the ideal functionality, the functionality passes the randomized attribute to the sender. The sender checks whether the randomized attribute satisfies the associated pairing product equation. If yes, the sender tells the functionality that the attribute should be accepted, otherwise, rejected. The access policies associated with the messages are completely hidden. The encryption of the messages is such that only those receivers whose attribute set satisfies the access policy attached with messages could recover a correct message otherwise a random message is recovered.

\section{Preliminaries}\label{preliminaries}
\noindent \textbf{Notations:} Throughout, $\rho$ is taken as the security parameter, $x \xleftarrow{\$} A$
is a random element sampled from the set $A$, $y \leftarrow B$
indicates algorithm $B$ outputs y, $X \stackrel{c}{\approx} Y$ denotes $X$ and $Y$ are indistinguishable computationally and $\mathbb{N}$ denotes natural numbers. 

\begin{definition}(Bilinear Pairing)
The map
$e : \mathbb{G}_{1} \times \mathbb{G}_{2} \rightarrow
\mathbb{G}_T$ is \emph{bilinear} if the following
conditions are stisfied. (i) $e(x^{a}, y^{b}) = e(x, y)^{ab} ~\forall~ x \in
\mathbb{G}_1,  y \in \mathbb{G}_2, a, b \in \mathbb{Z}_{p}$.
(ii) $e(x, y)$ generates $\mathbb{G}_T$,
$~\forall~ x \in \mathbb{G}_1,  y \in
\mathbb{G}_2, x \neq 1,  y \neq 1$.
(iii) $e(x, y)$ is efficiently computable
$~\forall~ x \in \mathbb{G}_1, y \in
\mathbb{G}_2$, where $\mathbb{G}_1,
\mathbb{G}_2$ and $\mathbb{G}_T$ are three multiplicative cyclic
groups of prime order $p$ and $g_1$ and $g_2$ are generators of
$\mathbb{G}_1$ and $\mathbb{G}_2$ respectively.
\end{definition}
\noindent If $\mathbb{G}_1 \neq \mathbb{G}_2$, then $e$ is {\em
asymmetric} bilinear pairing. Otherwise, $e$ is {\em symmetric}
bilinear pairing.

\begin{definition}($q$-SDH \cite{boneh2004shortsignature}) The advantage ${\sf
Adv}^{q-SDH}_{\mathbb{G}}(\mathcal{A}) = {\sf Pr}[\mathcal{A}(g,
g^x$, $g^{x^2}, \ldots, g^{x^q}) = (c$, $g^{\frac{1}{x + c}})]$ of $q$-Strong Diffie-Hellman (SDH)
assumption is
negligible in $\rho$, where $g \xleftarrow{\$} \mathbb{G}, x
\xleftarrow{\$} \mathbb{Z}_{p}, c \in \mathbb{Z}_{p}$ and $\mathcal{A}$ is a PPT algorithm with running time
in $\rho$.
\end{definition}

\begin{definition}(DLIN \cite{boneh2004short}) The advantage ${\sf
Adv}^{DLIN}_{\mathbb{G}}(\mathcal{A}) = {\sf Pr}[\mathcal{A}(g,
g^a, g^b$, $g^{ra}, g^{sb}, g^{r+s})] - {\sf Pr}[\mathcal{A}(g, g^a,
g^b, g^{ra}, g^{sb}, t)]$ of Decision Linear (DLIN) assumption in
is negligible in $\rho$, where $g
\xleftarrow{\$} \mathbb{G}, t \xleftarrow{\$} \mathbb{G}, a, b, r,
s \in \mathbb{Z}_{p}$ and $\mathcal{A}$ is a PPT algorithm with running time
in $\rho$.
\end{definition}

\begin{definition}(DBDH \cite{waters2005efficient}) The advantage ${\sf Adv}^{DBDH}_{\mathbb{G},
\mathbb{G}_T}(\mathcal{A}) = {\sf Pr}[\mathcal{A}(g, g^a, g^b,
g^c, e(g, g)^{abc})] - {\sf Pr}[\mathcal{A}(g, g^a$, $g^b, g^c,
Z)]$ of Decision Bilinear
Diffie-Hellman (DBDH) assumption is negligible in $\rho$, where $g \xleftarrow{\$} \mathbb{G},
Z \xleftarrow{\$} \mathbb{G}_T, a, b, c \in \mathbb{Z}_{p}$ and $\mathcal{A}$ is a PPT algorithm with running time
in $\rho$.
\end{definition}

\noindent\begin{boxit}
\noindent \textbf{${\sf BilinearSetup}$: } The ${\sf BilinearSetup}$ on input
 $\rho$ outputs ${\sf params} = (p, \mathbb{G}_1, \mathbb{G}_2, \mathbb{G}_{T}, e, g_1, g_2)$, where $e :
\mathbb{G}_1 \times \mathbb{G}_2 \rightarrow \mathbb{G}_T$ is an asymmetric bilinear pairing, $g_1$
generates a group $\mathbb{G}_1$, $g_2$
generates a group $\mathbb{G}_2$ and $p$, the order of the groups
$\mathbb{G}_1$, $\mathbb{G}_2$ and $\mathbb{G}_T$, is prime, i.e ${\sf params} \leftarrow {\sf BilinearSetup}(1^{\rho})$.
\end{boxit}

\subsection{Non-Interactive Verification of Pairing Product Equation \cite{groth2008efficient}}\label{noninteractive}
A prover and a verifier runs the Groth-Sahai proofs for
non-interactive verification of a pairing product equation
\begin{eqnarray}\label{eq3}
\prod_{q=1}^{Q}e(a_q\prod_{i=1}^{n}x_{i}^{\alpha_{q, i}}, b_q\prod_{i=1}^{n}y_{i}^{\beta_{q, i}}) &=&t_{T},
\end{eqnarray}
where $a_{q} \in \mathbb{G}_1, b_{q} \in \mathbb{G}_2$, $\alpha_{q, i}, \beta_{q, i} \in \mathbb{Z}_{p}$ and $t_{T} \in
\mathbb{G}_{T}$, the coefficients of the pairing product
equation \ref{eq3} are given to the verifier, $q =1, 2,
\ldots, Q, i =1, 2, \ldots, n$.
The prover knows the secret values (also called witnesses) $x_{i =1, 2,
\ldots, n} \in \mathbb{G}_1, y_{i =1, 2, \ldots, n} \in \mathbb{G}_2$ that satisfy
the equation \ref{eq3}. The prover proves in a non-interactive way that he knows $x_{i}$ and $y_{i}$ without
giving any knowledge about secret values to the verifier. Let
$\mathcal{W} = \{x_{i =1, 2, \ldots, n}, y_{i =1, 2, \ldots, n}\}$
denotes the set of all secret values equation
\ref{eq3}. The set $\mathcal{W}$ is called witnesses.

%The product of two vectors is defined
%component wise, i.e, $(a_1, a_2, a_3)(b_1, b_2, b_3) = (a_1b_1,
%a_2b_2, a_3b_3)$ for $(a_1, a_2, a_3), (b_1, b_2, b_3) \in
%\mathbb{G}_{1}^3$ for a finite order group $\mathbb{G}_1$.

\noindent For non-interactive verification of the pairing product
equation \ref{eq3}, a common reference string ${\sf GS}$ is generated upon input a security
parameter $\rho$ as follows. Let ${\sf params} =
(p, \mathbb{G}_1, \mathbb{G}_2, \mathbb{G}_{T}, e, g_1$, $g_2)$ ~$\leftarrow {\sf
BilinearSetup}(1^\rho)$, set $u_1 = (g_{1}^a$, $1, g_1) \in \mathbb{G}_{1}^{3},
u_2 = (1, g_{1}^b, g_1) \in \mathbb{G}_{1}^{3}$, $u_3 =
u_{1}^{\xi_{1}}u_{2}^{\xi_{2}} = (g_{1}^{a\xi_{1}},
g_{1}^{b\xi_{2}}$, $g_{1}^{\xi_{1} + \xi_{2}}) \in \mathbb{G}_{1}^{3}$, $v_1 = (g_{2}^a, 1, g_2) \in \mathbb{G}_{2}^{3},
v_2 = (1, g_{2}^b, g_2) \in \mathbb{G}_{2}^{3}$, $v_3 =
v_{1}^{\xi_{1}}v_{2}^{\xi_{2}} = (g_{2}^{a\xi_{1}},
g_{2}^{b\xi_{2}}$, $g_{2}^{\xi_{1} + \xi_{2}}) \in \mathbb{G}_{2}^{3}$,
where $\xi_{1}, \xi_{2} \xleftarrow{\$} \mathbb{Z}_{p}$, $a, b \xleftarrow{\$} \mathbb{Z}_{p}$ and $\mu_1 : \mathbb{G}_1
\rightarrow \mathbb{G}_{1}^3$, $\mu_2 : \mathbb{G}_2
\rightarrow \mathbb{G}_{2}^3$, $\mu_T : \mathbb{G}_T \rightarrow
\mathbb{G}_{T}^9$ are three efficiently computable embeddings such
that
$\mu_1(g_1) = (1, 1, g_1), \mu_2(g_2) = (1, 1, g_2),  {\rm ~and~} \mu_T(t_T) = \left(
                                             \begin{array}{ccc}
                                               1 & 1 & 1 \\
                                               1 & 1 & 1 \\
                                               1 & 1 & t_T \\
                                             \end{array}
                                           \right) \forall~g_1 \in \mathbb{G}_1, g_2 \in \mathbb{G}_2, t_T \in \mathbb{G}_T.
$
Note that $\mu_T(t_T) \in\mathbb{G}_{T}^9$ written in matrix form for convenience. Two elements of $\mathbb{G}_{T}^9$ are multiplied component wise. The common reference string ${\sf GS} = (u_1, u_2, u_3, v_1, v_2, v_3)$ is made public to the prover and the verifier. The commitments of $x_{i =1, 2,
\ldots, n}$ and $y_{i =1, 2, \ldots, n}$ are generated
using ${\sf GS}$ by  the prover. To
commit $x_i \in \mathbb{G}_1$ and $y_i \in \mathbb{G}_2$, the prover
takes $r_{1i}, r_{2i}, r_{3i} \xleftarrow{\$} \mathbb{Z}_{p}$ and
$s_{1i}, s_{2i}, s_{3i} \xleftarrow{\$} \mathbb{Z}_{p}$, sets
$$
c_i = {\sf Com}(x_i) = \mu_1(x_i)u_{1}^{r_{1i}}u_{2}^{r_{2i}}u_{3}^{r_{3i}},~~
d_i = {\sf Com'}(y_i) = \mu_2(y_i)v_{1}^{s_{1i}}v_{2}^{s_{2i}}v_{3}^{s_{3i}},
$$
\noindent where ${\sf Com}(x_i) \in \mathbb{G}_{1}^{3}, {\sf Com'}(y_i) \in \mathbb{G}_{2}^{3}$ for each $i=1, 2, \ldots, n$. The following proof components are generated by the prover
$$
P_j = \prod_{q=1}^{Q}\left(\widehat{d_{q}}\right)^{\sum_{i=1}^{n}\alpha_{q, i}r_{ji}},~~
P'_j =  \prod_{q=1}^{Q}\left(\mu_1(a_q)\prod_{i=1}^{n}\mu_1(x_i)^{\alpha_{q, i}}\right)^{\sum_{i=1}^{n}\beta_{q, i}s_{ji}}$$
using random values $r_{ji}, s_{ji}$, which were used for generating commitments to $x_{i}, y_{i}$, and gives proof $\pi = (c_1, c_2, \ldots, c_n, d_1$, $d_2, \ldots, d_n, P_1, P_2, P_3, P'_1, P'_2, P'_3)$ to the verifier, where $$\widehat{d_q} = \mu_2(b_q)\prod_{i=1}^{n}d_{i}^{\beta_{q, i}}, ~~1\leq i \leq n, ~~j = 1, 2, 3.$$
The verifier computes
$$
\widehat{c_q} = \mu_1(a_q)\prod_{i=1}^{n}c_{i}^{\alpha_{q, i}},
\widehat{d_q} = \mu_2(b_q)\prod_{i=1}^{n}d_{i}^{\beta_{q, i}},
$$
using $c_i, d_i$, coefficients $\alpha_{q, i}, \beta_{q, i}$ and outputs {\sf VALID} if the following equation holds
\begin{eqnarray}\label{eq1}
\prod_{q=1}^{Q}F(\widehat{c_q}, \widehat{d_q}) = \mu_{T}(t_T)\prod_{j=1}^{3}F(u_j, P_j)F(P'_j, v_j),
\end{eqnarray}
where $F : \mathbb{G}_{1}^3 \times \mathbb{G}_{2}^3 \rightarrow \mathbb{G}_{T}^9$ is defined as
$$F((x_1, x_2, x_3), (y_1, y_2, y_3)) =
 \left(
   \begin{array}{ccc}
     e(x_1, y_1) & e(x_1, y_2) & e(x_1, y_3) \\
     e(x_2, y_1) & e(x_2, y_2) & e(x_2, y_3) \\
     e(x_3, y_1) & e(x_3, y_2) & e(x_3, y_3) \\
   \end{array}
 \right).$$
%Note that the function $F$ is also bilinear and $F((x_1, x_2, x_3)$, $(y_1, y_2, y_3))$ is an element of $\mathbb{G}_{T}^9$.
%The product of two elements of $\mathbb{G}_{T}^9$ is component wise. For convenience, it has been written in matrix form.

\noindent The equations \ref{eq1} and \ref{eq3} holds simultaneously. 
%The equation \ref{eq3} is non-linear. If in equation \ref{eq3} only $x_{i}$ or $y_{i}$ are secrets, then it is a linear equation. 
For a linear equation in which $y_{i}$ are secrets, the verifier verifies the following equation
\begin{align}
&\prod_{q=1}^{Q}F\left(\mu_1(a_q)\prod_{i=1}^{n}\mu_1(x_i)^{\alpha_{q, i}}, \widehat{d_q}\right) = \mu_{T}(t_T)\prod_{j=1}^{3}F(P'_j, v_j),\\
&{\rm where~~} P'_j = \prod_{q=1}^{Q}\left(\mu_1(a_q)\prod_{i=1}^{n}\mu_1(x_i)^{\alpha_{q, i}}\right)^{\sum_{i=1}^{n}\beta_{q, i}s_{ji}}, j = 1, 2, 3.~~
\end{align}
If $x_{i}$ are secrets, the verifier has to verify the following equation
\begin{align}
&\prod_{q=1}^{Q}F\left(\widehat{c_q}, \mu_2(b_q)\prod_{i=1}^{n}\mu_2(y_i)^{\beta_{q, i}}\right) = \mu_{T}(t_T)\prod_{j=1}^{3}F(u_j, P_j),\\
&{\rm where~~} P_j = \prod_{q=1}^{Q}\left(\mu_2(b_q)\prod_{i=1}^{n}\mu_2(y_i)^{\beta_{q, i}}\right)^{\sum_{i=1}^{n}\alpha_{q, i}r_{ji}}, j = 1, 2, 3.~~
\end{align}

\begin{theorem}\label{crs}\cite{groth2008efficient}
The common reference strings in perfectly sound setting and witness indistinguishability setting under DLIN assumption are computationally indistinguishable .
\end{theorem}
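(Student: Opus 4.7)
The plan is to reduce the indistinguishability of the two CRSs directly to the DLIN assumption in each of the source groups, using a short hybrid argument. First I would articulate the structural difference between the two settings: in the witness-indistinguishability (WI) setting described in the excerpt, $u_3 = u_1^{\xi_1} u_2^{\xi_2}$ lies in the rank-$2$ subspace spanned by $(u_1, u_2)$ (and likewise $v_3$ lies in the span of $(v_1, v_2)$), so the commitments are perfectly hiding; in the perfectly sound setting, $u_3$ and $v_3$ are sampled to be linearly independent from these pairs (equivalently, their third coordinate is $g_1^{\xi_1+\xi_2+\delta}$ resp.\ $g_2^{\xi_1+\xi_2+\delta'}$ for random nonzero $\delta,\delta'$), which makes the commitments perfectly binding and extractable. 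Up to a negligible statistical defect, the perfectly sound CRS is equivalent to sampling $u_3,v_3$ uniformly at random from $\mathbb{G}_1^3$ and $\mathbb{G}_2^3$ respectively, since uniform vectors lie outside the rank-$2$ subspaces except with probability $1/p$.

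The hybrid sequence I would use is $H_0 \to H_1 \to H_2$, where $H_0$ is the WI CRS, $H_1$ replaces only $u_3$ by a uniformly random element of $\mathbb{G}_1^3$ (the rest generated as in $H_0$), and $H_2$ additionally replaces $v_3$ by a uniformly random element of $\mathbb{G}_2^3$, which is statistically close to the perfectly sound CRS. For $H_0 \stackrel{c}{\approx} H_1$, given a DLIN challenge $(g_1, g_1^a, g_1^b, g_1^{ra}, g_1^{sb}, T) \in \mathbb{G}_1^6$, a reduction $\mathcal{B}_1$ sets
\[
u_1 = (g_1^a, 1, g_1),\quad u_2 = (1, g_1^b, g_1),\quad u_3 = (g_1^{ra}, g_1^{sb}, T),
\]
and samples $v_1,v_2,v_3$ honestly as in the WI setting. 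When $T = g_1^{r+s}$, the tuple $(u_1,u_2,u_3)$ is distributed exactly as in $H_0$ with $\xi_1 = r,\xi_2 = s$; when $T$ is uniform in $\mathbb{G}_1$, the tuple is distributed as in $H_1$. Hence any distinguisher between $H_0$ and $H_1$ yields $\mathcal{B}_1$ with the same advantage against DLIN in $\mathbb{G}_1$. The transition $H_1 \stackrel{c}{\approx} H_2$ is handled by a symmetric reduction $\mathcal{B}_2$ that embeds a DLIN challenge in $\mathbb{G}_2$ into $(v_1,v_2,v_3)$ while generating $u_1,u_2$ honestly and sampling $u_3$ uniformly from $\mathbb{G}_1^3$.

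Chaining these steps, any PPT distinguisher between the sound and WI CRSs has advantage at most ${\sf Adv}^{DLIN}_{\mathbb{G}_1}(\mathcal{B}_1) + {\sf Adv}^{DLIN}_{\mathbb{G}_2}(\mathcal{B}_2) + O(1/p)$, which is negligible in $\rho$. The only subtlety I expect is the bookkeeping at the endpoints of the chain: one must verify that the statistical gap between a uniformly random $u_3 \in \mathbb{G}_1^3$ (resp.\ $v_3 \in \mathbb{G}_2^3$) and the sound distribution (random linearly independent vector) is $1/p$, and that the embedding in each reduction exactly matches the prescribed CRS distribution including the public embeddings $\mu_1,\mu_2,\mu_T$. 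These are routine but must be checked before concluding the theorem.
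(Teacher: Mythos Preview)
The paper does not supply its own proof of this theorem; it simply cites Groth--Sahai \cite{groth2008efficient} and moves on. Your hybrid-plus-DLIN reduction is indeed the standard argument from that reference, so at the level of strategy you are aligned with what the paper implicitly imports. Two points deserve correction, however.

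First, you have the two settings swapped relative to both the paper and Groth--Sahai. In the description here (Section~\ref{noninteractive} and Algorithm~\ref{chap6:algo1}), the \emph{perfectly sound} CRS is the one with $u_3 = u_1^{\xi_1}u_2^{\xi_2}$ in the span of $u_1,u_2$ (commitments are linear encryptions, hence extractable), while the \emph{witness-indistinguishable} CRS is $u_3 = u_1^{\xi_1}u_2^{\xi_2}\cdot(1,1,g_1)$ outside the span (commitments perfectly hiding); see ${\sf Game}~1$ in the proof of Theorem~\ref{theorem1}. Your reduction still works after relabeling the endpoints, but as written the roles are inverted.

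Second, and more substantively, your two-step hybrid treats the $\mathbb{G}_1$-part $(u_1,u_2,u_3)$ and the $\mathbb{G}_2$-part $(v_1,v_2,v_3)$ as independently generated, embedding separate DLIN challenges in each. In this paper's CRS the two parts are \emph{correlated}: the same exponents $\widetilde a,\widetilde b,\xi_1,\xi_2$ are used for both $u_i$ and $v_i$. Your reduction $\mathcal{B}_1$, given only $(g_1,g_1^a,g_1^b,g_1^{ra},g_1^{sb},T)\in\mathbb{G}_1^6$, cannot produce the matching $v_1=(g_2^a,1,g_2)$, $v_2=(1,g_2^b,g_2)$, $v_3=v_1^{r}v_2^{s}$ without knowing $a,b,r,s$; sampling the $v_i$ ``honestly as in the WI setting'' with fresh exponents yields a CRS whose distribution does not match either $H_0$ or $H_1$. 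The fix is either to invoke a single bilateral DLIN instance that supplies the challenge simultaneously in $\mathbb{G}_1$ and $\mathbb{G}_2$ (this is how DLIN is typically stated for asymmetric Groth--Sahai, and is what the paper's use of a single ``DLIN assumption'' implicitly intends), collapsing your two hybrids into one, or to assume an efficiently computable homomorphism between the groups. Without one of these adjustments, the hybrid chain does not go through for the specific CRS defined here.
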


\begin{definition}(${\sf NIWI}$) The advantage $${\sf Adv}^{{\sf NIWI}}_{\mathbb{G}, \mathbb{G}_T}(\mathcal{A}) = {\sf Pr}\left[\mathcal{A}({\sf GS}, \mathcal{S}, \mathcal{W}_0) = \pi\right] - {\sf Pr}\left[\mathcal{A}({\sf GS}, \mathcal{S}, \mathcal{W}_1) = \pi\right]$$ of non-interactive witness- indistinguishable (${\sf NIWI}$) proof is negligible in $\rho$ under DLIN assumption, where ${\sf GS}$ denotes perfectly sound setting's common reference string, $\mathcal{S}$ is a pairing product equation, $\mathcal{W}_0, \mathcal{W}_1$ denote two different set of witnesses satisfying equation $\mathcal{S}$, $\pi$ denotes proof for equation $\mathcal{S}$ and $\mathcal{A}$ is a PPT algorithm with running time
in $\rho$.
\end{definition}

\begin{definition}(${\sf NIZK}$) The advantage $${\sf Adv}^{{\sf NIZK}}_{\mathbb{G}, \mathbb{G}_T}(\mathcal{A}) = {\sf Pr}[\mathcal{A}({\sf GS'}, \mathcal{S}, \mathcal{W}) = \pi_0] - {\sf Pr}[\mathcal{A}({\sf GS'}, \mathcal{S}, {\sf t_{sim}}) = \pi_1]$$ of non-interactive zero-knowledge (${\sf NIZK}$) proof is negligible in $\rho$ under DLIN assumption, where ${\sf GS'}$ denotes witness indistinguishability setting's common reference string, $\mathcal{S}$ is a pairing product equation, $\mathcal{W}$ denotes witnesses satisfying pairing equation $\mathcal{S}$, $\pi_0$ denotes proof for equation $\mathcal{S}$, $\pi_1$ denotes simulated proof for equation $\mathcal{S}$ and $\mathcal{A}$ is a PPT algorithm with running time
in $\rho$.
\end{definition}
%\vspace{-.3cm}
%\noindent The notations ${\sf NIWI}\{\{(x_{i}, y_{i}\}_{1\leq i \leq n}) | \prod_{q=1}^{Q}e(a_q$ $\prod_{i=1}^{n}x_{i}^{\alpha_{q, i}}$, $b_q\prod_{i=1}^{n}y_{i}^{\beta_{q, i}}) =t_{T}\}$ for ${\sf NIWI}$ proof and ${\sf NIZK}\{(x_{i}, y_{i}\}_{1\leq i \leq n}) | $ $\prod_{q=1}^{Q}e(a_q\prod_{i=1}^{n}x_{i}^{\alpha_{q, i}}, b_q\prod_{i=1}^{n}y_{i}^{\beta_{q, i}}) = t_{T} \}$ for ${\sf NIZK}$ proof are followed in our construction. The convention is that the
%quantities in the parenthesis denote elements the knowledge of
%which are being proved to the verifier by the prover while all
%other parameters are known to the verifier. We have the following theorem.

\begin{theorem}\label{niwi}\cite{groth2008efficient}
Under DLIN assumption, the Groth-Sahai proofs are composable ${\sf NIWI}$ and ${\sf NIZK}$ for
satisfiability of a set of pairing product equation under a bilinear group .
\end{theorem}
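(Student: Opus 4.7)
The plan is to exploit Theorem~\ref{crs} together with the linear-algebraic structure of the two CRS settings. In the perfectly sound (binding) CRS, the triples $(u_1,u_2,u_3)$ and $(v_1,v_2,v_3)$ each span a full $3$-dimensional space, so the commitment scheme is perfectly binding (there is an extraction trapdoor that recovers each $x_i,y_i$ from $c_i,d_i$); this gives perfect soundness of the verification equation \ref{eq1}. In the witness-indistinguishability (hiding) CRS explicitly written in Section~\ref{noninteractive}, one has $u_3=u_1^{\xi_1}u_2^{\xi_2}\in\langle u_1,u_2\rangle$ and $v_3=v_1^{\xi_1}v_2^{\xi_2}\in\langle v_1,v_2\rangle$, so the third basis element is redundant and the proof system becomes algebraically degenerate in a controlled way. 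By Theorem~\ref{crs}, the two CRS distributions are computationally indistinguishable under DLIN, which is the only computational step I will invoke.

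For the NIWI property I would run a two-step hybrid. Starting from an adversary $\mathcal{A}$ that distinguishes proofs built from $\mathcal{W}_0$ and $\mathcal{W}_1$ on the sound CRS, first switch the CRS to the hiding setting; by Theorem~\ref{crs} this changes $\mathcal{A}$'s view by at most the DLIN advantage. On the hiding CRS, I would then verify by direct algebraic manipulation of the expressions $c_i=\mu_1(x_i)u_1^{r_{1i}}u_2^{r_{2i}}u_3^{r_{3i}}$, $d_i=\mu_2(y_i)v_1^{s_{1i}}v_2^{s_{2i}}v_3^{s_{3i}}$, and of the resulting $P_j,P_j'$, that the joint distribution of $(c_1,\ldots,c_n,d_1,\ldots,d_n,P_1,P_2,P_3,P_1',P_2',P_3')$ is independent of whether $\mathcal{W}_0$ or $\mathcal{W}_1$ is used: differences between the two witnesses are reabsorbed by re-parameterizing the exponents $r_{ji},s_{ji}$ via the linear dependencies $u_3\in\langle u_1,u_2\rangle$ and $v_3\in\langle v_1,v_2\rangle$, and the proof elements are then forced to match by the verification identity. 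Hence the overall distinguishing advantage is bounded by ${\sf Adv}^{DLIN}_{\mathbb{G}}$, establishing composable NIWI.

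For the NIZK property I would proceed dually. Working on the hiding CRS, the trapdoor $(\xi_1,\xi_2)$ allows a simulator, given only the statement (the coefficients $a_q,b_q,\alpha_{q,i},\beta_{q,i},t_T$), to commit to trivial placeholders (e.g.\ identity elements) and then solve equation~\ref{eq1} for $P_j,P_j'$ without access to any witness; the existence of such a solution follows from the linear redundancy of the third basis vector. A distributional calculation analogous to the NIWI one shows that this simulated transcript is identically distributed to a real transcript from an honest prover with any valid witness $\mathcal{W}$ on the hiding CRS. Switching back to the sound CRS (where real proofs are actually produced) costs one further DLIN hop via Theorem~\ref{crs}. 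Composability comes for free, because the CRS is a single fixed public string reused across arbitrarily many statements and each of the (constantly many) hybrid steps either loses at most one DLIN advantage globally or is perfect. The main obstacle I anticipate is precisely the two perfect-distributional arguments on the hiding CRS: they are not immediate from a pointwise statement about commitment hiding and require carefully tracking how the $6n+6$ random exponents $r_{ji},s_{ji}$ can be re-parameterized so that every witness-dependent (respectively real-versus-simulated) quantity is absorbed into pure randomness, using the explicit closed forms of $P_j,P_j'$ given in Section~\ref{noninteractive}.
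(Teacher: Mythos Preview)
The paper does not prove Theorem~\ref{niwi}; it is imported wholesale from \cite{groth2008efficient} with no argument given. So there is no in-paper proof to compare against, and your sketch is being measured against the original Groth--Sahai argument. At the high level your plan is the standard one and is fine: two CRS modes that are DLIN-indistinguishable (Theorem~\ref{crs}), perfect soundness/extraction on one mode, perfect witness-indistinguishability and trapdoor simulation on the other, and a single global CRS-switch hybrid that yields composability.

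However, you have the two CRS modes reversed, and this is not a cosmetic slip---it makes the mechanisms you describe for each property incorrect. In the paper's notation (Section~\ref{noninteractive} and the ${\sf CrsSetup}$ algorithm), the \emph{perfectly sound} setting is precisely the one with $u_3=u_1^{\xi_1}u_2^{\xi_2}\in\langle u_1,u_2\rangle$; extractability there comes from the fact that $\mu_1(x)=(1,1,x)$ lies \emph{outside} $\langle u_1,u_2\rangle$, so the coset of the commitment modulo that $2$-dimensional span determines $x$ (this is exactly the computation $Z/(X^{1/\widehat a}Y^{1/\widehat b})=x$ used in Game~2 of the security proof). The \emph{hiding} setting is the one with $u_3=u_1^{\xi_1}u_2^{\xi_2}\cdot(1,1,g_1)$ (see Game~1 of part~(a) in the proof of Theorem~\ref{theorem1}), where $u_1,u_2,u_3$ now span all of $\mathbb{G}_1^3$; perfect hiding and equivocation hold because the randomiser $r_{3i}$ on the linearly independent third vector can absorb any change in the committed value. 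Your explanations invoke the opposite structure in each case (``linear dependency $u_3\in\langle u_1,u_2\rangle$'' as the source of hiding and re-randomisation; ``full $3$-dimensional span'' as the source of extraction), so the concrete re-parameterisation you propose for the perfect-WI step and the ``solve for $P_j,P_j'$ via redundancy'' step for simulation would not go through as written. Once you swap the roles, the rest of your outline is essentially the Groth--Sahai proof; you should also note, for the ${\sf NIZK}$ part, the standard caveat that simulation of a pairing-product equation requires the target $t_T$ to be trivial or expressible as a product of known pairings (the paper uses this implicitly by introducing the auxiliary variable $g'$ with $e(g_1,g')=e(g_1,g_2)$ in $\psi$ and $\delta_{\sigma_j}$).
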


\subsection{Access Structure for Ciphertext \cite{nishide2008attribute}}\label{access structure}
Let $n$ be the total number of attributes, and the attributes are indexed as $\{\mathbb{A}_1, \mathbb{A}_2, \ldots, \mathbb{A}_n\}$. Each attribute $\mathbb{A}_{\ell}$ can take $n_{\ell}$ values, where $\ell=1, 2, \ldots, n$. Let $S_{\ell} = \{v_{\ell,1}, v_{\ell,2}, \ldots, v_{\ell, t}$, $\ldots, v_{\ell, n_{\ell}}\}$ be the set of possible values for $\mathbb{A}_{\ell}$. The notation $L = [L_1, L_2, \ldots, L_n]$ is used to denote attribute list or attribute set for a receiver $R$, where $L_{\ell} \in S_{\ell}$. Each receiver has exactly one attribute value from each $S_{\ell}$, $\ell=1, 2, \ldots, n$. The notation $W = [W_1, W_2, \ldots, W_n]$ is used to specify the access policy associated with a message, where $W_{\ell} \subseteq S_{\ell}$. The attribute list $L$ satisfies $W$ if and only if $L_{\ell} \in W_{\ell}$, for $\ell=1, 2, \ldots, n$. The notation $L\models W$ is used to denote $L$ satisfies $W$.
For instance, let $n=3$ and $m'$ be a message with
access policy $W = [W_1 = \{v_{1, 1} , v_{1, 3}\}, W_2 = \{v_{2, 2}\}, W_3 = \{v_{3, 1}, v_{3, 2}, v_{3, 3}\}]$. Consider two attribute lists $L = [v_{1, 1}, v_{2, 2}, v_{3, 1}], \widetilde{L} = [v_{1, 2}, v_{2, 2}, v_{3, 3}]$. The attribute list $L\models W$, but $\widetilde{L}$ does not satisfy $W$ as $v_{1, 2} \notin W_1$.

\section{Security Model}\label{securitymodel}
\subsection{Syntactic of ${\sf AOT\mhyphen HAP}$}
The adaptive oblivious transfer with hidden access policy (${\sf AOT\mhyphen HAP}$) is a tuple of PPT algorithms and interactive protocols, ${\sf AOT\mhyphen HAP} = ({\sf CRSSetup}, {\sf DBSetup}, {\sf Issue}, {\sf Transfer})$
between a sender and multiple receivers.
\begin{description}
\item {\textbf{-- ${\sf CrsSetup}$:}} This randomized algorithm generates common reference string ${\sf crs}$ which is made public for everyone.
\item \textbf{-- ${\sf DBsetup}$:} This randomized algorithm is run by the sender $S$ who holds a database ${\sf DB} = \{m_i, W_i\}_{1 \leq i \leq N}$ of $N$ messages. Each message $m_i$ is associated with access policy $W_i = [W_{i, 1}, W_{i,2}, \ldots, W_{i,n}]$, where $W_{i, \ell} \subseteq S_{\ell}$, $i = 1, 2, \ldots, N, \ell = 1, 2, \ldots, n$. The algorithm generates public/secret key pair $({\sf pk}, {\sf sk})$, encrypts each $m_i$ under access policy $W_i$ and generates ciphertext $\phi_i$. The access policy $W_i$ is embedded implicitly in $\phi_i$. The algorithm outputs $({\sf pk}, {\sf sk}, \psi, {\sf cDB})$ to $S$. The proof $\psi$ guarantees that $({\sf pk}, {\sf sk})$ is a valid public/secret key pair. The sender $S$ publishes public key ${\sf pk}$, proof $\psi$ and cihertext database ${\sf cDB} = \{\phi_i\}_{1 \leq i \leq N}$ to all the receivers. The secret key ${\sf sk}$ and access policies $\{W_i\}_{1 \leq i \leq N}$ are kept hidden.
\item \textbf{-- ${\sf Issue}$ Protocol:} A receiver $R$ with attributes list $L = [L_1, L_2, \ldots, L_n]$, $L_{\ell} \in S_{\ell}$, engages in ${\sf Issue}$ protocol with $S$ to get attribute secret key ${\sf ASK}$ from $S$, where $\ell = 1, 2, \ldots, n$. At the end, $R$ gets ${\sf ASK}$ without leaking any information about its attribute list $L$ to $S$.
\item \textbf{-- ${\sf Transfer}$ Protocol:} The receiver $R$ invokes ${\sf transfer}$ protocol with $S$ to decrypt the ciphertext $\phi_{\sigma_j}$, where $\sigma_j \in \{1, 2, \ldots, N\}, j= 1, 2, \ldots, k$. The receiver $R$ recovers the message $m_{\sigma_j}$ if its attribute list $L$ satisfies the access policy $W_{\sigma_j}$ associated with $m_{\sigma_j}$, otherwise, $R$ gets some random message.
\end{description}

\subsection{UC Framework}\label{ucmodel} The \emph{universal composable (UC)} framework was presented by Canetti \cite{canetti2002universally} is adapted in this paper. It includes two worlds-- (i) \emph{real world} and (ii) \emph{ideal world}. The members involved in real world are $M$ receivers $R_1, R_2, \ldots, R_M$, a sender $S$, an adversary $\mathcal{A}$ and an \emph{environment machine} $\mathcal{Z}$. The environment machine $\mathcal{Z}$ starts the protocol execution $\Pi$ by giving inputs to all the parties. In real protocol execution $\Pi$, parties interact with each other and $\mathcal{A}$. Respective outputs are given $\mathcal{Z}$ by all the parties. Finally, $\mathcal{Z}$ outputs a bit, which is output of real world. Ideal world includes an additional incorruptible trusted party $\mathcal{F}_{\sf AOT\mhyphen HAP}^{N \times 1}$, called an ideal functionality. In ideal world, parties are dummy parties, $\widetilde{S}, \widetilde{R}_1, \widetilde{R}_2, \ldots, \widetilde{R}_M$. Dummy parties do not communicate with each other. Activated dummy party just forwards its input to $\mathcal{F}_{\sf AOT\mhyphen HAP}^{N \times 1}$ which is instructed to do the computation work. The ideal functionality $\mathcal{F}_{\sf AOT\mhyphen HAP}^{N \times 1}$ gives outputs to the dummy parties. At the end, $\mathcal{Z}$ outputs a bit, which is output of ideal world. 
Now, a protocol $\Pi$ is said to be securely realizes an ideal functionality $\mathcal{F}_{\sf AOT\mhyphen HAP}^{N \times 1}$ if $\mathcal{Z}$ is unable to distinguish its interaction with non-negligible probability between a real world adversary $\mathcal{A}$ and parties $S, R_1, R_2, \ldots, R_M$ running the protocol $\Pi$ in the real world $\mathcal{A}$ and an ideal world adversary $\mathcal{A'}$ and  $\mathcal{F}_{\sf AOT\mhyphen HAP}^{N \times 1}$, interacting with $\widetilde{S}, \widetilde{R}_1, \widetilde{R}_2, \ldots, \widetilde{R}_M$, in the ideal world.

\begin{definition}
%Let $\mathcal{F}_{\sf AOT\mhyphen HAP}^{N \times 1}$ be the adaptive oblivious transfer with hidden access policy
%functionality described above.
A protocol $\Pi$
securely realizes the ideal functionality $\mathcal{F}_{\sf AOT\mhyphen HAP}^{N \times 1} $ if for any $\mathcal{A}$, there
exists $\mathcal{A'}$ such that for any
$\mathcal{Z}$, $${ \sf IDEAL}_{\mathcal{F}_{\sf AOT\mhyphen HAP}^{N \times 1}, \mathcal{A'}, \mathcal{Z}} \stackrel{c}{\approx}
{\sf REAL}_{\Pi, \mathcal{A}, \mathcal{Z}},$$ where ${\sf
IDEAL}_{\mathcal{F}_{\sf AOT\mhyphen HAP}^{N \times 1}, \mathcal{A'},
\mathcal{Z}}$ and ${\sf
REAL}_{\Pi, \mathcal{A}, \mathcal{Z}}$ are ideal world and  real
world outputs respectively.
\end{definition}

\noindent \textbf{$\mathcal{F}_{{\sf CRS}}^{\mathcal{D}}$ Hybrid
Model:} As ideal functionalities can be UC-realized only in the presence of some trusted setup. One common form of setup is common reference string $({\sf CRS})$. Let us describe
the $\mathcal{F}_{{\sf CRS}}^{\mathcal{D}}$-hybrid model
\cite{canetti2002universally} that UC realizes a protocol
parameterized by some specific distribution
 $\mathcal{D}$. Upon receiving a message $({\sf CRS})$,
from any party, $\mathcal{F}_{{\sf CRS}}^{\mathcal{D}}$
first checks if there is a recorded value ${\sf crs}$. If not,
$\mathcal{F}_{{\sf CRS}}^{\mathcal{D}}$ generates ${\sf crs}
\leftarrow \mathcal{D}(1^{\rho})$ and records it. Finally,
$\mathcal{F}_{{\sf CRS}}^{\mathcal{D}}$ sends ${\sf
crs}$ to the party  and the adversary. In the proposed construction, the distribution $\mathcal{D}$ is ${\sf CrsSetup}$ algorithm, i.e, ${\sf crs}
\leftarrow {\sf CrsSetup}(1^{\rho})$. The algorithm outputs ${\sf crs}$ and is made public.
Let us briefly discuss how the parties interacts in the real world and in the ideal world. We discuss an interaction between the sender $S$ and a receiver $R$.

\noindent \textbf{Real World:} Upon receiving messages $({\sf initdb}, {\sf DB} = \{m_i$, $W_i\}_{1\leq i\leq N})$, $({\sf issue}, L)$ and $({\sf transfer}, \sigma_j)$ from $\mathcal{Z}$, how $S$ and $R$ behave are briefly discussed below, where $j = 1, 2, \ldots, k$.
\begin{itemize}
\item The sender $S$ upon receiving a message $({\sf initdb}, {\sf DB} = \{m_i, W_i\}_{1\leq i\leq N})$ runs ${\sf DBSetup}$ algorithm with input ${\sf crs}$ and ${\sf DB}$. In ${\sf DB} = \{m_i, W_i\}_{1\leq i\leq N}$, each $m_i$ is associated with access policy $W_i$. The algorithm outputs public/secret key pair $({\sf pk}, {\sf sk})$, a proof $\psi$ which shows the validity of public/secret key pair and ciphertext database ${\sf cDB} = \{\phi_i\}_{1\leq i\leq N}$. The access policy $W_i$ is embedded implicitly in $\phi_i$. The sender $S$ publishes ${\sf pk}, \psi, {\sf cDB}$ to all parties and keeps ${\sf sk}, \{W_i\}_{1\leq i\leq N}$ secret to itself.
\item The receiver $R$ upon receiving a message $({\sf issue}, L)$ from $\mathcal{Z}$ engages in ${\sf Issue}$ protocol with $S$. The receiver $R$ first randomizes its attribute set $L$ and generates randomized attributes set $L'$. The sender $S$ upon receiving $L'$ from $R$, generates randomized attribute secret key ${\sf ASK'}$ and gives ${\sf ASK'}$ to $R$. The receiver $R$ extracts ${\sf ASK}$ from ${\sf ASK'}$ using random values which were used to randomize the set $L$.
\item Upon receiving a message $({\sf transfer}, \sigma_j)$ from $\mathcal{Z}$, $R$ invokes ${\sf Transfer}$ protocol. The receiver $R$ takes ciphertext $\phi_{\sigma_j}$ from ${\sf cDB}$ and generates ${\sf Req}_{\sigma_j}, {\sf Pri}_{\sigma_j}, \pi_{\sigma_j}$. The request ${\sf Req}_{\sigma_j}$ is the randomization of $\phi_{\sigma_j}$ and is given to $S$. The proof $\pi_{\sigma_j}$ convinces $S$ that $R$ is decrypting a valid ciphertext $\phi_{\sigma_j} \in {\sf cDB}$. The receiver $R$ gives ${\sf Req}_{\sigma_j}, \pi_{\sigma_j}$ to $S$ and keeps ${\sf Pri}_{\sigma_j}$ secret to itself. The sender $S$ generates response ${\sf Res}_{\sigma_j}$ using secret key ${\sf sk}$. The response ${\sf Res}_{\sigma_j}$ and proof $\delta_{\sigma_j}$ is given to $R$ by $S$. The proof $\delta_{\sigma_j}$ convinces $R$ that $S$ has used the same secret key ${\sf sk}$ which was used in generating ${\sf cDB}$. Using ${\sf Res}_{\sigma_j}$, $R$ correctly decrypts $\phi_{\sigma_j}$ and recovers $m_{\sigma_j}$ if $R$'s attribute set $L$ implicitly satisfies $W_{\sigma_j}$. Otherwise, $R$ obtains a random message.
\end{itemize}
\noindent \textbf{Ideal World:} Upon receiving messages $({\sf initdb}, {\sf DB} = \{m_i$, $W_i\}_{1\leq i\leq N})$, $({\sf issue}, L)$ and $({\sf transfer}, \sigma_j)$ from $\mathcal{Z}$, the dummy parties forward them to the ideal functionality $\mathcal{F}_{\sf AOT\mhyphen HAP}^{N \times 1}$.
The ideal functionality $\mathcal{F}_{\sf AOT\mhyphen HAP}^{N \times 1}$ keeps an empty database ${\sf DB}$ and an empty attribute set $L_R$ for each receiver $R$. Let us briefly discuss the actions of $\mathcal{F}_{\sf AOT\mhyphen HAP}^{N \times 1}$.
\begin{itemize}
\item Upon receiving message $({\sf initdb}, {\sf DB} = \{m_i, W_i\}_{1\leq i\leq N})$ from $S$, $\mathcal{F}_{\sf AOT\mhyphen HAP}^{N \times 1}$ updates ${\sf DB} = \{m_i, W_i\}_{1\leq i\leq N}$.
\item The ideal functionality $\mathcal{F}_{\sf AOT\mhyphen HAP}^{N \times 1}$ upon receiving message $({\sf issue}, L)$ from $R$, sends $({\sf issue})$ to $S$ and receives a bit $b \in \{0, 1\}$ in return from $S$. If $b=1$, $\mathcal{F}_{\sf AOT\mhyphen HAP}^{N \times 1}$ updates $L_R = L$. Otherwise, $\mathcal{F}_{\sf AOT\mhyphen HAP}^{N \times 1}$ does nothing.
\item Upon receiving message $({\sf transfer}, \sigma_j)$ from $R$, $\mathcal{F}_{\sf AOT\mhyphen HAP}^{N \times 1}$ sends $({\sf transfer})$ to $S$, who in turn returns a bit $b \in \{0, 1\}$. If $b=1$, $\mathcal{F}_{\sf AOT\mhyphen HAP}^{N \times 1}$ checks whether $\sigma_j \in \{1, 2$, $\ldots, N\}$ and $L_R$ satisfies $W_{\sigma_j}$. If yes, $\mathcal{F}_{\sf AOT\mhyphen HAP}^{N \times 1}$ gives $m_{\sigma_j}$ to $R$. Otherwise, $\mathcal{F}_{\sf AOT\mhyphen HAP}^{N \times 1}$ sends a null string $\bot$ to $R$.
\end{itemize}

\section{Protocol}\label{protocoldescription}
Our ${\sf AOT\mhyphen HAP} = ({\sf CRSSetup}, {\sf DBSetup}, {\sf Issue}, {\sf Transfer})$ couples Boneh-Boyen (BB) signature \cite{boneh2004shortsignature} and ciphertext policy attribute based encryption (CP-ABE) of Nishide {\em et al.} \cite{nishide2008attribute}. In addition, Groth-Sahai \cite{groth2008efficient} proofs for non-interactive verification of pairing product equations are employed. The ${\sf AOT\mhyphen HAP}$ involves a sender and multiple receivers. We consider an interaction between a sender $S$ with a database ${\sf DB} = \{m_i, W_i\}_{1 \leq i\leq N}$ and a receiver $R$ with attribute list $L = [v_{1, t_1}, v_{2, t_2}, \ldots, v_{n, t_n}]$. As a protocol can be universal composable (UC) realized only under some trusted setup assumptions, algorithm ${\sf CRSSetup}$ is invoked to generate a common reference string ${\sf crs}$ which is made public to all parties. In initialization phase, S executes algorithm ${\sf DBSetup}$. The sender $S$ holds a database ${\sf DB} = \{m_i, W_i\}_{1 \leq i \leq N}$, where $W_i$ is the access policy associated with $m_i$, $1 \leq i \leq N$. Each message $m_i$ is encrypted to generate ciphertext $\phi_i$ using CP-ABE of \cite{nishide2008attribute} under $W_i$. The access policy $W_i$ is embedded implicitly in $\phi_i$ and is not made public. The BB signature is used to sign the random value which was used to encrypt $m_i$. In Issue phase, the receiver $R$ with the attribute list $L$ interacts with $S$ to get attribute secret key ${\sf ASK}$. To decrypt the ciphertext $\phi_{\sigma_j}$, $R$ engages in ${\sf Transfer}$ protocol with $S$ and extracts $m_{\sigma_j}$ if $L \models W_{\sigma_j}$, otherwise, $R$ gets a random message. Formal description of ${\sf CRSSetup}, {\sf DBSetup}, {\sf Issue}, {\sf Transfer}$ are given below. Let $\mathbb{A}_1, \mathbb{A}_2, \ldots, \mathbb{A}_n$ be $n$ attributes and each attribute $\mathbb{A}_{\ell}$ can take $n_{\ell}$ values, namely, $\{v_{\ell, 1}, v_{\ell, 2}, \ldots, v_{\ell, n_{\ell}}\}$ as explained in section \ref{access structure}, $\ell = 1, 2, \ldots, n$.

\begin{algorithm}[H]
\caption {~~${\sf CrsSetup}$}\label{chap6:algo1}
%\scriptsize
\begin{algorithmic}[1]
    \Require {$\rho, n, n_1, n_2, \ldots, n_n$.}
    \Ensure {${\sf crs}$.}
    \State {Generate ${\sf Assparams} \leftarrow {\sf AssBilinearSetup}(1^{\rho})$};
    \For {$(\ell =0, 1, \ldots, n)$}
    \For {$(t =0, 1, \ldots, n_{\ell})$}
    \State {Choose $a_{\ell, t} \xleftarrow{\$} \mathbb{Z}_{p}$};
    \State {Set $A_{\ell, t} = g_{1}^{a_{\ell, t}}, B_{\ell, t} = g_{2}^{a_{\ell, t}}$};
    \EndFor
    \EndFor
    \State {Take $\xi_1, \xi_2, \widetilde{a}, \widetilde{b}, \widehat{\xi}_1, \widehat{\xi}_2, \widehat{a}, \widehat{b} \xleftarrow{\$} \mathbb{Z}_{p}$};
    \State {Set $u_1 = (g_{1}^{\widetilde{a}}, 1, g_1), u_2 = (1, g_{1}^{\widetilde{b}}, g_1)$, $u_3 = u_{1}^{\xi_1}u_{2}^{\xi_2} = (g_{1}^{\widetilde{a}\xi_1}, g_{1}^{\widetilde{b}\xi_2}, g_{1}^{\xi_1+\xi_2})$};
    \State {Set $v_1 = (g_{2}^{\widetilde{a}}, 1, g_{2}), v_2 = (1, g_{2}^{\widetilde{b}}, g_{2}),$ $v_3 = v_{1}^{\xi_1}v_{2}^{\xi_2} = (g_{2}^{\widetilde{a}\xi_1}, g_{2}^{\widetilde{b}\xi_2}, g_{2}^{\xi_1+\xi_2})$};
    \State {$\widehat{u}_1 = (g_{1}^{\widehat{a}}, 1, g_1), \widehat{u}_2 = (1, g_{1}^{\widehat{b}}, g_1),$ $\widehat{u}_3 = \widehat{u}_{1}^{\widehat{\xi}_1}\widehat{u}_{2}^{\widehat{\xi}_2} = (g_{1}^{\widehat{a}\widehat{\xi}_1}, g_{1}^{\widehat{b}\widehat{\xi}_2}, g_{1}^{\widehat{\xi}_1+\widehat{\xi}_2})$};
    \State {$\widehat{v}_1 = (g_{2}^{\widehat{a}}, 1, g_2), \widehat{v}_2 = (1, g_{2}^{\widehat{b}}, g_2),$
$\widehat{v}_3 = \widehat{v}_{1}^{\widehat{\xi}_1}\widehat{v}_{2}^{\widehat{\xi}_2} = (g_{2}^{\widehat{a}\widehat{\xi}_1}, g_{2}^{\widehat{b}\widehat{\xi}_2}, g_{2}^{\widehat{\xi}_1+\widehat{\xi}_2})$};
 \State {${\sf GS}_S = (u_1, u_2, u_3, v_1, v_2, v_3), {\sf GS}_R = (\widehat{u}_1, \widehat{u}_2, \widehat{u}_3, \widehat{v}_1, \widehat{v}_2, \widehat{v}_3)$};
  \State {${\sf crs} = ({\sf Assparams}, {\sf GS}_S, {\sf GS}_R, \{A_{\ell, t}, B_{\ell, t}\}_{1 \leq t \leq n_{\ell}, 1 \leq \ell \leq n})$};
\end{algorithmic}
\end{algorithm}

\noindent \textbf{-- ${\sf CrsSetup}(1^{\rho})$.} The algorithm generates common reference string ${\sf crs}$ on input security parameter $\rho$, number of attributes $n$, number of values $n_{\ell}$  that each attribute $\mathbb{A}_{\ell}$ can take. The algorithm generates bilinear pairing groups by invoking algorithm ${\sf BilinearSetup}$ given in section \ref{preliminaries}. For each attribute value $v_{\ell, t}$, the algorithm picks $a_{\ell, t}$ randomly from $\mathbb{Z}_{p}$, sets $A_{\ell, t} = g_{1}^{a_{\ell, t}}, B_{\ell, t} = g_{2}^{a_{\ell, t}}$, for $t = 1, 2, \ldots, n_{\ell}$, $\ell =1, 2, \ldots, n$. The algorithm generates the Groth-Sahai common reference strings ${\sf GS}_S = (u_1, u_2, u_3, v_1, v_2, v_3)$ for $S$ and ${\sf GS}_R = (\widehat{u}_1, \widehat{u}_2, \widehat{u}_3, \widehat{v}_1, \widehat{v}_2, \widehat{v}_3)$ for $R$ in perfectly sound setting by choosing $\xi_1, \xi_2, \widetilde{a}, \widetilde{b}, \widehat{\xi}_1, \widehat{\xi}_2, \widehat{a}, \widehat{b}$ randomly from $\mathbb{Z}_{p}$ and setting $u_1 = (g_{1}^{\widetilde{a}}, 1, g_1), u_2 = (1, g_{1}^{\widetilde{b}}, g_1), u_3 = (g_{1}^{\widetilde{a}\xi_1}, g_{1}^{\widetilde{b}\xi_2}, g_{1}^{\xi_1+\xi_2})$,
$v_1 = (g_{2}^{\widetilde{a}}, 1, g_{2}), v_2 = (1, g_{2}^{\widetilde{b}}, g_{2}), v_3 = (g_{2}^{\widetilde{a}\xi_1}, g_{2}^{\widetilde{b}\xi_2}, g_{2}^{\xi_1+\xi_2})$,
$\widehat{u}_1 = (g_{1}^{\widehat{a}}, 1, g_1), \widehat{u}_2 = (1, g_{1}^{\widehat{b}}, g_1), \widehat{u}_3 = (g_{1}^{\widehat{a}\widehat{\xi}_1}, g_{1}^{\widehat{b}\widehat{\xi}_2}, g_{1}^{\widehat{\xi}_1+\widehat{\xi}_2})$, $\widehat{v}_1 = (g_{2}^{\widehat{a}}, 1, g_2), \widehat{v}_2 = (1, g_{2}^{\widehat{b}}, g_2), \widehat{v}_3 = (g_{2}^{\widehat{a}\widehat{\xi}_1}, g_{2}^{\widehat{b}\widehat{\xi}_2}, g_{2}^{\widehat{\xi}_1+\widehat{\xi}_2})$. It outputs ${\sf crs} = ({\sf params}, {\sf GS}_S, {\sf GS}_R$, $\{\{A_{\ell, t}, B_{\ell, t}\}_{1 \leq t \leq n_{\ell}}\}_{1 \leq \ell \leq n})$ to all parties.

\begin{algorithm}[H]
\caption {~~${\sf DBSetup}$}\label{chap6:algo2}
%\scriptsize
\begin{algorithmic}[1]
    \Require {${\sf crs}, {\sf DB} = \{(m_i, W_i)\}_{1\leq i \leq N}$.}
    \Ensure {${\sf pk}, {\sf sk}, \psi, {\sf cDB} = \{\phi_i\}_{1\leq i \leq N}$.}
    \State  {Randomly choose $w, x, \alpha, \beta, \gamma \xleftarrow{\$} \mathbb{Z}_{p}$};
    \State {Compute $B = g_{1}^{\beta}, y = g_{1}^{x}, h = g_{2}^{\gamma}, Y = e(g_1, g_{2}^{w}),$ $H = e(B, h), B^{\gamma}, g_{1}^{w}, g_{1}^{\alpha}, g_{2}^{\alpha}$};
    \State {Generate ${\sf Com'}(h)$ using ${\sf GS}_R$};
    \State {Set public key ${\sf pk} = (B, y, Y, H$, ${\sf Com'}(h))$};
    \State {Set secret key ${\sf sk} = (x,\alpha, \beta, \gamma, w, h, B^{\gamma}, g_{1}^{w}, g_{2}^{w}, g_{1}^{\alpha}, g_{2}^{\alpha})$};
    \State {The proof $\psi = {\sf NIZK}_{{\sf GS}_S}\{(B^{\gamma}, g_{1}^{w}, g_{2}^{w}, h, g_{1}^{\alpha}, g_{2}^{\alpha}, g')~|~ $
$e(g_{1}^{-1}, g_{2}^{w})e(g_{1}^{w}, g') =1
\wedge e(B^{-1}, h)e(B^{\gamma}, g') = 1 \wedge e(g_{1}^{-1}, g_{2}^{\alpha})e(g_{1}^{\alpha}, g')=1 \wedge e(g_1, g') = e(g_1, g_2)\}$};
\For{$(i =1, 2, \ldots, N)$}
\State {$s_{i, 1}, s_{i,2}, \ldots, s_{i,n} \xleftarrow{\$} \mathbb{Z}_{p}$};
\State {Set $r_i = s_{i, 1} + s_{i,2} + \ldots + s_{i,n} $};
\State {$c_{i}^{(1)} = g_{2}^{\frac{1}{x+r_i}}$};
\State {$c_{i}^{(2)} = B^{r_i}$};
\State {$c_{i}^{(3)} = m_i\cdot Y^{r_i}\cdot e(c_{i}^{(2)}, h)$};
\For {$(\ell =1, 2, \ldots, n)$}
\State {$c_{i, \ell}^{(4)} = g_{1}^{s_{i, \ell}}$}
\For {$(t =1, 2, \ldots, n_{\ell})$}
\If {$(v_{\ell, t} \in W_{i, \ell})$}
\State {$c_{i, \ell, t}^{(5)} = A_{\ell, t}^{\alpha s_{i, \ell}}$, where $A_{\ell, t}$ is extracted from ${\sf crs}$};
\Else
\State {$c_{i, \ell, t}^{(5)} = g_{1}^{z_{i, \ell, t}}, z_{i, \ell, t}\xleftarrow{\$} \mathbb{Z}_{p}$};
  \EndIf
\EndFor
\EndFor
\State {Set $c_{i}^{(4)} = \{c_{i, \ell}^{(4)}\}_{1\leq \ell \leq n}, c_{i}^{(5)} = \{\{c_{i, \ell, t}^{(5)}\}_{1\leq t\leq n_{\ell}}\}_{1\leq \ell \leq n}$};
\State {$\phi_i = (c_{i}^{(1)}, c_{i}^{(2)}, c_{i}^{(3)}, c_{i}^{(4)}, c_{i}^{(5)})$};
\EndFor
\State {${\sf cDB} = (\phi_1, \phi_2, \ldots, \phi_N)$};
\end{algorithmic}
\end{algorithm}

\noindent \textbf{-- ${\sf DBSetup}({\sf crs}, {\sf DB} = \{(m_i, W_i)\}_{1\leq i \leq N})$.} This algorithm is executed by $S$ with input ${\sf crs}$, database ${\sf DB} = \{(m_i$, $W_i)\}_{1\leq i \leq N}$, where $m_i \in \mathbb{G}_T$, $W_i = [W_{i,1}, W_{i,2}, \ldots, W_{i,n}]$ is the access policy associated with $m_i$, $W_{i, \ell} \subseteq S_{\ell}$ as discussed in section \ref{access structure}, $i =1, 2, \ldots, N$, $\ell =1, 2, \ldots, n$. The algorithm randomly takes $w, x, \alpha, \beta, \gamma$ from $\mathbb{Z}_{p}$, computes $B = g_{1}^{\beta}, y = g_{1}^{x}, h = g_{2}^{\gamma}, Y = e(g_1, g_{2}^{w}), H = e(B, h), B^{\gamma}, g_{1}^{w}$, $g_{1}^{\alpha}, g_{2}^{\alpha}$. Additionally, it computes ${\sf Com'}(h)$ using ${\sf GS}_R$ as in section \ref{noninteractive}. The algorithm sets public key ${\sf pk} = (B, y, Y, H$, ${\sf Com'}(h))$, secret key ${\sf sk} = (x,\alpha, \beta, \gamma, w, h, B^{\gamma}, g_{1}^{w}, g_{2}^{w}, g_{1}^{\alpha}, g_{2}^{\alpha})$. The proof
\begin{gather*}
\psi = {\sf NIZK}_{{\sf GS}_S} \{(B^{\gamma}, g_{1}^{w}, g_{2}^{w}, h, g_{1}^{\alpha}, g_{2}^{\alpha}, g')~|~ e(g_{1}^{-1}, g_{2}^{w})e(g_{1}^{w}, g') =1\wedge e(B^{-1}, h)e(B^{\gamma}, g') = 1 \wedge \\
(g_{1}^{-1}, g_{2}^{\alpha})e(g_{1}^{\alpha}, g')=1 \wedge e(g_1, g') = e(g_1, g_2)\}
\end{gather*}
is generated to convince $R$ that $S$ knows secrets $g_{2}^{w}, g_{2}^{\alpha}, h$. The proof $\psi$ consists of proof components of equations $e(g_{1}^{-1}, g_{2}^{w})e(g_{1}^{w}, g') =1\wedge e(B^{-1}, h)e(B^{\gamma}$, $g')$ $= 1 \wedge (g_{1}^{-1}, g_{2}^{\alpha})e(g_{1}^{\alpha}, g')=1\wedge e(g_1, g') = e(g_1, g_2)$ and commitments of witnesses $B^{\gamma}, g_{1}^{w}, g_{2}^{w}, h, g_{1}^{\alpha}, g_{2}^{\alpha}, g'$ generated using ${\sf GS}_S$ as explained in section \ref{noninteractive}. To encrypt $m_i$ under $W_i$, random values $s_{i, 1}, s_{i,2}, \ldots, s_{i,n}$ are selected from $ \mathbb{Z}_{p}$. The algorithm sets $r_i = s_{i, 1} + s_{i,2} + \ldots + s_{i,n}$ and signs $r_i$ using Boneh-Boyen (BB) signature \cite{boneh2004shortsignature} as shown in line 10 of Algorithm 2. After that, $m_i$ is encrypted using CP-ABE of \cite{nishide2008attribute}.

\noindent In ciphertext $\phi_i = (c_{i}^{(1)}, c_{i}^{(2)}, c_{i}^{(3)}, c_{i}^{(4)}, c_{i}^{(5)})$, $c_{i}^{(1)} = g_{2}^{\frac{1}{x+r_i}}$ is the signature on $r_i$, $c_{i}^{(2)}, c_{i}^{(3)}, c_{i}^{(4)}, c_{i}^{(5)}$ is the CP-ABE of $m_i$, $i =1, 2, \ldots, N$. The algorithm outputs ${\sf pk}, {\sf sk}, \psi$ and ciphertext database ${\sf cDB} = \{\phi_i\}_{1\leq i \leq N}$ to $S$. The sender $S$ publishes ${\sf pk}, \psi, {\sf cDB} = \{\phi_i\}_{1\leq i \leq N}$ to all parties and keeps ${\sf sk}$ hidden.

\noindent \textbf{Correctness of ${\sf cDB}$:} Anyone can verify the correctness of ${\sf cDB} = \{\phi_i\}_{1\leq i \leq N}$, $\phi_i = (c_{i}^{(1)}, c_{i}^{(2)}, c_{i}^{(3)}$, $c_{i}^{(4)} = \{c_{i, \ell}^{(4)}\}_{1\leq \ell \leq n}, c_{i}^{(5)})$ by verifying the following equations
$$e\left(\prod_{\ell=1}^{n}c_{i, \ell}^{(4)}\cdot y,  c_{i}^{(1)}\right) = e(g_1, g_2) ~~\forall~ i = 1, 2, \ldots, N.$$

\begin{figure}
\centering
\epsfig{file=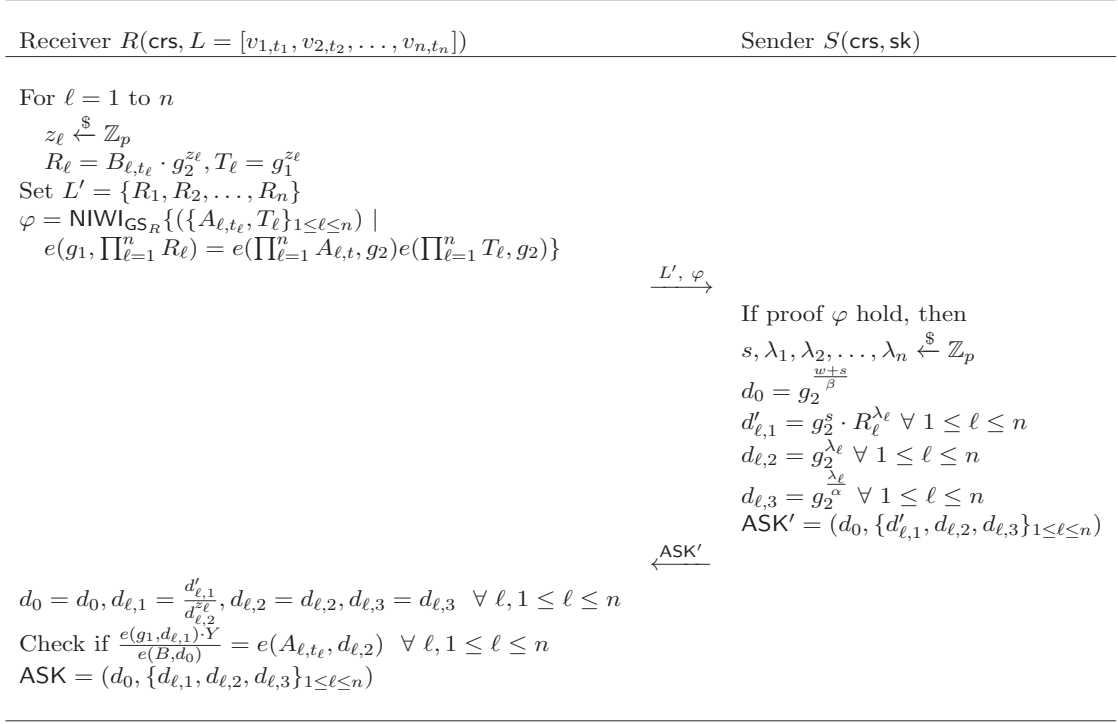, height=4in, width=6in,}
\caption{Issue Phase.} \label{fig1}
\end{figure}

\noindent \textbf{-- ${\sf Issue}$ Protocol$({\sf crs}, L, {\sf sk})$.} The receiver $R$ invokes ${\sf Issue}$ protocol, given in Figure \ref{fig1}, to obtain attribute secret key ${\sf ASK}$ for its attributes list $L = [v_{1, t_1}, v_{2, t_2}, \ldots, v_{n, t_n}]$, where $v_{\ell, t_{\ell}} \in S_{\ell}, \ell =1, 2, \ldots, n$ as given in section \ref{access structure}. To achieve this, $R$ randomizes $B_{\ell, t_{\ell}}$, extracted from ${\sf crs} = ({\sf params}, {\sf GS}_S$, ${\sf GS}_R$, $\{\{A_{\ell, t}, B_{\ell, t}\}_{1 \leq t \leq n_{\ell}}\}_{1 \leq \ell \leq n})$, corresponding to each $v_{\ell, t_{\ell}}$ from $L$ using random value $z_{\ell}$ and generates randomized set $L'$ as shown in Figure \ref{fig1}, where $t_{\ell} =1, 2, \ldots, n_{\ell}$, $\ell = 1, 2, \ldots, n$.
The receiver $R$ sets proof $\varphi = {\sf NIWI}_{{\sf GS}_R}\{(\{A_{\ell, t_{\ell}}$, $T_{\ell}\}_{1 \leq \ell \leq n})~|~ e(g_1, \prod_{\ell=1}^{n}R_{\ell}) = e(\prod_{\ell=1}^{n}A_{\ell,t}$, $g_2)e(\prod_{\ell=1}^{n}T_{\ell}, g_2)\}$ to convince $R$ that $L'$ is the randomization of $L$. The randomized set $L'$ and proof $\varphi$ is given to $S$. If the proof $\varphi$ is valid, $S$ uses secret $w, \alpha, \beta$ from its secret key ${\sf sk} = (x,\alpha, \beta, \gamma, w, h, B^{\gamma}, g_{1}^{w}, g_{2}^{w}, g_{1}^{\alpha}, g_{2}^{\alpha})$ to generate $d_0 = g_{2}^{\frac{w+s}{\beta}}$, $d'_{\ell, 1} = g_{2}^{s}\cdot R_{\ell}^{\lambda_{\ell}}$, $d_{\ell, 2} = g_{2}^{\lambda_{\ell}}$, $d_{\ell, 3} = g_{2}^{\frac{\lambda_{\ell}}{\alpha}}$, where $s, \lambda_{\ell}$ are random values from $\mathbb{Z}_{p}$, $\ell =1, 2, \ldots, n$. The randomized attribute secret key ${\sf ASK'} = (d_0, \{d'_{\ell, 1}, d_{\ell, 2}, d_{\ell, 3}\}_{1\leq \ell \leq n})$ is given to $R$. The receiver $R$ extracts ${\sf ASK}$ from ${\sf ASK'}$ using random values $z_1, z_2, \ldots, z_n$ which were used by $R$ to randomize $L$ and checks the correctness of ${\sf ASK}$ by verifying the equation
$$\frac{e(g_1, d_{\ell,1})\cdot Y}{e(B, d_0)} = e(A_{\ell, t_{\ell}}, d_{\ell,2}) ~~~~~\forall~\ell,  1\leq \ell \leq n.$$
%\begin{figure}[htbp]
%%\scriptsize
%\centering
%\begin{tabular}{l c l}
%\cline{1-3}
%&&\\
%Receiver $R ({\sf crs}, L = [v_{1, t_1}, v_{2, t_2}, \ldots, v_{n, t_n}])$ & & Sender $S({\sf crs}, {\sf sk})$  \\
%\cline{1-3}
%&&\\
%For $\ell =1$ to $n$&&\\
%~~~$z_{\ell} \xleftarrow{\$} \mathbb{Z}_{p}$&&\\
%~~~$R_{\ell} = B_{\ell, t_{\ell}}\cdot g_{2}^{z_{\ell}}, T_{\ell} = g_{1}^{z_{\ell}}$&&\\
%Set $L' = \{R_1, R_2, \ldots, R_n\}$&&\\
%$\varphi = {\sf NIWI}_{{\sf GS}_R}\{(\{A_{\ell, t_{\ell}}, T_{\ell}\}_{1\leq \ell \leq n})~|~ $&&\\
%$~~~e(g_1, \prod_{\ell=1}^{n}R_{\ell}) = e(\prod_{\ell=1}^{n}A_{\ell,t}, g_2)e(\prod_{\ell=1}^{n}T_{\ell}, g_2)\}$&&\\
%&$\xlongrightarrow {L', ~\varphi}$ &\\
%&&If proof $\varphi$ hold, then\\
%&&$s, \lambda_1, \lambda_2, \ldots, \lambda_n \xleftarrow{\$} \mathbb{Z}_{p}$\\
%&&$d_0 = g_{2}^{\frac{w+s}{\beta}}$\\
%&&$d'_{\ell, 1} = g_{2}^{s}\cdot R_{\ell}^{\lambda_{\ell}} ~\forall~ 1\leq \ell \leq n $\\
%&&$ d_{\ell, 2} = g_{2}^{\lambda_{\ell}} ~\forall~ 1\leq \ell \leq n$\\
%&&${\sf ASK'} = (d_0, \{d'_{\ell, 1}, d_{\ell, 2}\}_{1\leq \ell \leq n})$\\
%&$\xlongleftarrow {{\sf ASK'}}$&\\
%$d_0 = d_0, d_{\ell, 1} = \frac{d'_{\ell, 1}}{d_{\ell, 2}^{z_{\ell}}}, d_{\ell, 2} = d_{\ell, 2}~~\forall~\ell,  1\leq \ell \leq n$&&\\\\
%\hline
%\end{tabular}
%\vspace{2mm} \caption{Issue Phase.} \label{fig1}
%\vskip -6pt
%\end{figure}

\begin{figure*}
\centering
\epsfig{file=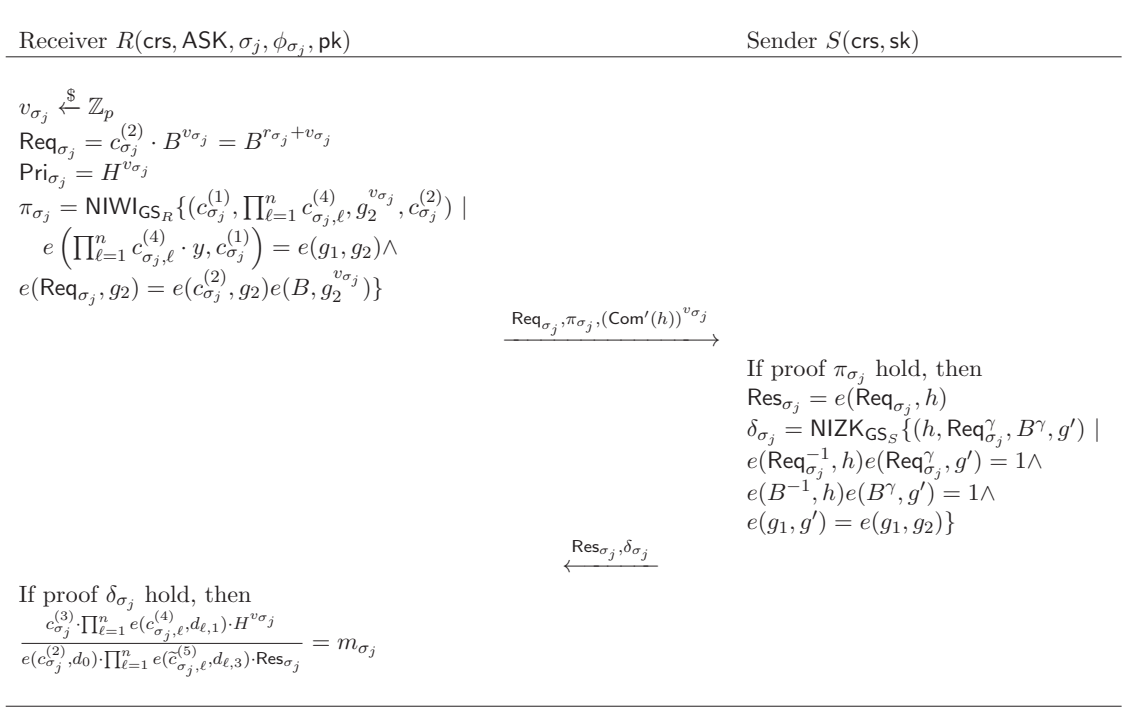, height=4in, width=6in,}
\caption{Transfer Phase.} \label{fig2}
\end{figure*}

\noindent \textbf{-- ${\sf Transfer}$ Protocol$({\sf crs}, {\sf ASK}, {\sf pk}, {\sf sk}, \sigma_j, \phi_{\sigma_j})$.} The receiver $R$ engages in ${\sf Transfer}$ protocol with $S$ to decrypt the ciphertext $\phi_{\sigma_j} = (c_{\sigma_j}^{(1)}, c_{\sigma_j}^{(2)}, c_{\sigma_j}^{(3)}, c_{\sigma_j}^{(4)}, c_{\sigma_j}^{(5)})$ as shown in Figure \ref{fig2}. For this, $R$ randomizes $c_{\sigma_j}^{(2)} = B^{r_{\sigma_j}}$ using random value $v_{\sigma_j}$ from $\mathbb{Z}_{p}$ and generates request ${\sf Req}_{\sigma_j} = c_{\sigma_j}^{(2)}\cdot B^{v_{\sigma_j}} =
B^{r_{\sigma_j} + v_{\sigma_j}}$, private ${\sf Pri}_{\sigma_j} = H^{v_{\sigma_j}}$ and sets proof
$\pi_{\sigma_j}$ $= {\sf NIWI}_{{\sf GS}_R}\{(c_{\sigma_j}^{(1)}, \prod_{\ell=1}^{n}c_{\sigma_j, \ell}^{(4)}, g_{2}^{v_{\sigma_j}}, c_{\sigma_j}^{(2)})~|~$
$e\left(\prod_{\ell=1}^{n}c_{\sigma_j, \ell}^{(4)}\cdot y, c_{\sigma_j}^{(1)}\right)$ $= e(g_1, g_2) \wedge
e({\sf Req}_{\sigma_j}, g_2) = e(c_{\sigma_j}^{(2)}, g_2)e(B, g_{2}^{v_{\sigma_j}})\}$. The first equation $e(\prod_{\ell=1}^{n}c_{\sigma_j, \ell}^{(4)}\cdot y, c_{\sigma_j}^{(1)})$ $= e(g_1, g_2)$ in $\pi_{\sigma_j}$
guarantees $S$ that $\phi_{\sigma_j}$ is a valid ciphertext, i.e, $\phi_{\sigma_j} \in {\sf cDB}$, and 2nd equation shows that ${\sf Req}_{\sigma_j}$ is the randomization of $c_{\sigma_j}^{(2)}$. The receiver also raises power $v_{\sigma_j}$ to ${\sf Com'}(h)$ which was given by $S$ in initialization phase generated using ${\sf GS}_R$. This extra component does not give any advantage to adversary and is useful only for simulator during security analysis. The sender $S$ checks the validity of $\pi_{\sigma_j}$. If invalid, $S$ outputs a null string $\bot$. Otherwise, $S$
uses secret $h, \gamma$ from secret key ${\sf sk} = (x,\alpha, \beta, \gamma, w, h, B^{\gamma}, g_{1}^{w}, g_{2}^{w}, g_{1}^{\alpha}, g_{2}^{\alpha})$ to compute response
${\sf Res}_{\sigma_j} = e({\sf Req}_{\sigma_j}, h)$. The proof $\delta_{\sigma_j} = {\sf NIZK}_{{\sf GS}_S}\{(h, {\sf Req}_{\sigma_j}^{\gamma}, B^{\gamma}, g')~|~
e({\sf Req}_{\sigma_j}^{-1}, h)e({\sf Req}_{\sigma_j}^{\gamma}$, $g') = 1 \wedge
e(B^{-1}$, $h)e(B^{\gamma}, g') = 1 \wedge
e(g_1, g') = e(g_1, g_2)\}$
is generated to guarantee $R$ that $S$ has used the same secrets $h, \gamma$ which were used in initialization phase to set ciphertext database ${\sf cDB}$. Upon receiving ${\sf Res}_{\sigma_j}, \delta_{\sigma_j}$, $R$ first verifies $\delta_{\sigma_j}$. If fails, $R$ outputs a null string $\bot$. Otherwise, $R$ sets $\widetilde{c}_{\sigma_j, \ell}^{(5)} = c_{\sigma_j, \ell, t_{\ell}}^{(5)}$ if $v_{\ell, t_{\ell}} \in L$ for $\ell =1, 2, \ldots, n$ and computes
\vspace{-.3cm}
\begin{equation}\label{eq4}
\frac{c_{\sigma_j}^{(3)}\cdot \prod_{\ell=1}^{n} e(c_{\sigma_j, \ell}^{(4)}, d_{\ell, 1})\cdot H^{v_{\sigma_j}}}{e(c_{\sigma_j}^{(2)}, d_0)\cdot \prod_{\ell=1}^{n} e(\widetilde{c}_{\sigma_j, \ell}^{(5)}, d_{\ell, 3})\cdot{\sf Res}_{\sigma_j}} = m_{\sigma_j}.
\end{equation}

\noindent \textbf{Correctness of equation \ref{eq4}.}
\begin{align*}
&{\sf val}_1 = \prod_{\ell=1}^{n} e(c_{\sigma_j, \ell}^{(4)}, d_{\ell, 1}) = \prod_{\ell=1}^{n}e(g_{1}^{s_{\sigma_j, \ell}}, g_{2}^{s}\cdot B_{\ell, t_{\ell}}^{\lambda_{\ell}})\\
& = \prod_{\ell=1}^{n}e(g_{1}^{s_{\sigma_j, \ell}},g_{2}^{s + a_{\ell, t_{\ell}}\lambda_{\ell}})\\
&= e(g_1, g_2)^{s\sum_{\ell=1}^{n} s_{\sigma_j, \ell}}e(g_1, g_2)^{\sum_{\ell=1}^{n} s_{\sigma_j, \ell}a_{\ell, t_{\ell}}\lambda_{\ell}}
\end{align*}
\begin{align*}
&= e(g_1, g_2)^{sr_{\sigma_j}}e(g_1, g_2)^{\sum_{\ell=1}^{n} s_{\sigma_j, \ell}a_{\ell, t_{\ell}}\lambda_{\ell}} {\rm ~as~} \sum_{\ell=1}^{n} s_{\sigma_j, \ell} = r_{\sigma_j}
\\
&{\sf val}_2 = \prod_{\ell=1}^{n} e(\widetilde{c}_{\sigma_j, \ell}^{(5)}, d_{\ell, 3})
\\
&= \prod_{\ell=1}^{n} e(A_{\ell, t_{\ell}}^{\alpha s_{\sigma_j, \ell}}, g_{2}^{\frac{\lambda_{\ell}}{\alpha}}) = \prod_{\ell=1}^{n}e(g_{1}^{a_{\ell, t_{\ell}}s_{\sigma_j, \ell}}, g_{2}^{\lambda_{\ell}})\\
&=e(g_1, g_2)^{\sum_{\ell=1}^{n} s_{\sigma_j, \ell}a_{\ell, t_{\ell}}\lambda_{\ell}}\\
&\frac{{\sf val}_1}{e(c_{\sigma_j}^{(2)}, d_0){\sf val}_2} = \frac{e(g_1, g_2)^{sr_{\sigma_j}}}{e(g_1, g_2)^{r_{\sigma_j}(w+s)}} = \frac{1}{e(g_1, g_2)^{r_{\sigma_j}w}} = \frac{1}{Y^{r_{\sigma_j}}}\\
&\frac{c_{\sigma_j}^{(3)}\cdot {\sf val}_1\cdot H^{v_{\sigma_j}}}{e(c_{\sigma_j}^{(2)}, d_0)\cdot {\sf val}_2\cdot{\sf Res}_{\sigma_j}}
= \frac{m_{\sigma_j} Y^{r_{\sigma_j}}e(B,h)^{r_{\sigma_j}}H^{v_{\sigma_j}}}{Y^{r_{\sigma_j}}e(B, h)^{r_{\sigma_j + v_{\sigma_j}}}} = m_{\sigma_j},
%&= \frac{m_{\sigma_j}\cdot Y^{r_{\sigma_j}}\cdot e(g_1, g_2)^{sr_{\sigma_j}}}{Y^{r_{\sigma_j}}\cdot e(g_1, g_2)^{sr_{\sigma_j}}}
% = m_{\sigma_j},
\end{align*}
where $c_{\sigma_j}^{(3)} = m_{\sigma_j}\cdot Y^{r_{\sigma_j}}\cdot e(B, h)^{r_{\sigma_j}},H = e(B, h)$, ${\sf Res}_{\sigma_j} = e(B, h)^{r_{\sigma_j + v_{\sigma_j}}}$.

\begin{theorem}\label{theorem1}
The adaptive oblivious transfer with hidden access policy (${\sf AOT\mhyphen HAP}$) presented in section
\ref{protocoldescription} securely realizes the ideal
functionality $\mathcal{F}_{\sf AOT\mhyphen HAP}^{N \times 1}$ in the $\mathcal{F}_{\sf
CRS}^{\mathcal{D}}$-hybrid model described in section \ref{securitymodel}
under DLIN, DBDH and $q$-SDH assumptions.
\end{theorem}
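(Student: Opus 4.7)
The plan is to prove UC security by exhibiting, for every static PPT real-world adversary $\mathcal{A}$, an ideal-world simulator $\mathcal{A}'$ such that no environment $\mathcal{Z}$ distinguishes $\mathsf{REAL}_{\Pi,\mathcal{A},\mathcal{Z}}$ from $\mathsf{IDEAL}_{\mathcal{F}_{\sf AOT\mhyphen HAP}^{N\times 1},\mathcal{A}',\mathcal{Z}}$. Since corruption is static, I split into the two meaningful cases: (i) corrupted sender $S$ with honest receivers, and (ii) corrupted receiver $R$ with honest sender. In both, the simulator programs $\mathcal{F}_{\sf CRS}^{\mathcal{D}}$ so that whichever of $\mathsf{GS}_S,\mathsf{GS}_R$ the simulator needs to \emph{extract} from is generated in perfectly sound (binding) mode with trapdoor, while the one it needs to \emph{simulate} proofs under is generated in the witness-indistinguishable (hiding) mode; the gap between these CRS modes is bounded by DLIN via Theorem~\ref{crs}.

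\textbf{Case (i): Corrupted sender.} Here $\mathcal{A}'$ sets $\mathsf{GS}_S$ binding and $\mathsf{GS}_R$ hiding. After $\mathcal{A}$ publishes $({\sf pk},\psi,{\sf cDB})$, $\mathcal{A}'$ verifies $\psi$; by perfect soundness of the NIZK under $\mathsf{GS}_S$ it extracts the witnesses $(h,g_1^w,g_2^w,g_1^\alpha,g_2^\alpha,B^\gamma,\ldots)$ encoded in the commitments of $\psi$ and, combined with what is public in $\mathsf{pk}$, recovers an effective $\mathsf{sk}$ sufficient to decrypt each $\phi_i$ and recover $(m_i,W_i)$. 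It forwards the database $\{(m_i,W_i)\}_{1\le i\le N}$ to $\mathcal{F}_{\sf AOT\mhyphen HAP}^{N\times 1}$. For each $(\textsf{issue})$ call from the functionality, $\mathcal{A}'$ picks a dummy randomized list $L'$ of elements with fresh $z_\ell$ and uses the $\mathsf{GS}_R$ ZK trapdoor to produce a simulated $\varphi$; for each $(\textsf{transfer})$ call it selects an arbitrary valid $\phi_{\sigma_j}\in\mathsf{cDB}$, forms a properly randomized $\mathsf{Req}_{\sigma_j}$, and simulates $\pi_{\sigma_j}$. Indistinguishability follows because the Issue/Transfer messages have exactly the correct distribution in the real world (by the perfect randomization of $B_{\ell,t_\ell}$ and of $c_{\sigma_j}^{(2)}$) and the proofs are bounded by the zero-knowledge property.

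\textbf{Case (ii): Corrupted receiver.} Now $\mathcal{A}'$ flips the modes: $\mathsf{GS}_R$ is binding with trapdoor, $\mathsf{GS}_S$ hiding. The simulator runs $\mathsf{DBSetup}$ with a \emph{dummy} database (random messages, arbitrary policies), yielding a syntactically valid $\mathsf{cDB}$ and an $\mathsf{pk}$; the accompanying $\psi$ is simulated using the $\mathsf{GS}_S$ ZK trapdoor. When $\mathcal{A}$ sends an Issue request $(L',\varphi)$, $\mathcal{A}'$ verifies $\varphi$ and uses the $\mathsf{GS}_R$ extraction trapdoor to recover the committed $\{A_{\ell,t_\ell},T_\ell\}$, hence $L$, and forwards $(\textsf{issue},L)$ to $\mathcal{F}_{\sf AOT\mhyphen HAP}^{N\times 1}$; it then supplies $\mathsf{ASK}'$ honestly. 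When $\mathcal{A}$ sends a transfer message $(\mathsf{Req}_{\sigma_j},\pi_{\sigma_j})$, $\mathcal{A}'$ extracts from $\pi_{\sigma_j}$ the witnesses $c_{\sigma_j}^{(1)},\prod_\ell c_{\sigma_j,\ell}^{(4)},g_2^{v_{\sigma_j}},c_{\sigma_j}^{(2)}$ and identifies the unique index $\sigma_j$ for which the extracted $(c_{\sigma_j}^{(1)},\prod_\ell c_{\sigma_j,\ell}^{(4)})$ matches a ciphertext in $\mathsf{cDB}$; it queries $\mathcal{F}_{\sf AOT\mhyphen HAP}^{N\times 1}$ with $(\textsf{transfer},\sigma_j)$, obtains $m_{\sigma_j}$ (or $\bot$), programs the dummy $c_{\sigma_j}^{(3)}$ to decrypt to this value under $v_{\sigma_j}$, returns the corresponding $\mathsf{Res}_{\sigma_j}$, and simulates $\delta_{\sigma_j}$ via the $\mathsf{GS}_S$ trapdoor.

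\textbf{Hybrids and main obstacle.} Indistinguishability in Case (ii) is established through a hybrid sequence: (a) switch each CRS to its required mode (DLIN), (b) replace $\psi$ and all $\delta_{\sigma_j}$ by simulated proofs (NIZK zero-knowledge under DLIN), and (c) replace each honest ciphertext $\phi_i$ by an encryption of a random message under a random access policy. Step (c) is where DBDH enters: under the Nishide et al.\ CP-ABE semantic security proof, the factor $Y^{r_i}=e(g_1,g_2)^{wr_i}$ masking $m_i$ in $c_i^{(3)}$ is pseudorandom to any coalition of attribute keys not satisfying $W_i$, and the pattern of $c_{i,\ell,t}^{(5)}$ values hides $W_i$. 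The main technical obstacle I foresee is justifying, in Case (ii), that the extracted Transfer request always corresponds to an index already in $\mathsf{cDB}$. The extracted pair $(c_{\sigma_j}^{(1)},r_{\sigma_j}=\sum_\ell s_{\sigma_j,\ell})$ with $e(\prod_\ell c_{\sigma_j,\ell}^{(4)}\cdot y,c_{\sigma_j}^{(1)})=e(g_1,g_2)$ is precisely a Boneh-Boyen signature on $r_{\sigma_j}$ under verification key $y=g_1^x$; a fresh valid pair would yield a $q$-SDH forgery for $q\ge N$. Thus $q$-SDH pins down $\sigma_j$, closing the reduction and aligning the simulator's queries with $\mathcal{F}_{\sf AOT\mhyphen HAP}^{N\times 1}$.
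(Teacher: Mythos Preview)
Your proposal is essentially correct and follows the same two-case, hybrid-game structure as the paper. All the core ingredients---extraction of $L$ and $\sigma_j$ from the receiver's commitments via a binding $\mathsf{GS}_R$, the $q$-SDH reduction pinning any accepted Transfer request to an index already in $\mathsf{cDB}$ via BB-signature unforgeability, the DBDH/DLIN hybrid replacing the honest database by a random one through CP-ABE semantic security, and NIZK simulation of $\psi,\delta_{\sigma_j}$ under a hiding $\mathsf{GS}_S$---coincide with the paper's argument.

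The one place where you diverge is your Case~(i) (corrupted sender; the paper's Case~(b)). You set $\mathsf{GS}_R$ to hiding mode and propose to \emph{simulate} the receiver's proofs $\varphi$ and $\pi_{\sigma_j}$ via the zero-knowledge trapdoor. The paper instead keeps \emph{both} $\mathsf{GS}_S$ and $\mathsf{GS}_R$ in perfectly sound (binding) mode and constructs $\varphi,\pi$ honestly with actual witnesses for fixed dummy choices (a list $\widetilde{L}\models W_1$ for Issue; index~$1$ for every Transfer), invoking only the computational \emph{witness-indistinguishability} of Groth--Sahai proofs under a binding CRS. This matters because in the protocol as written, $\varphi$ and $\pi_{\sigma_j}$ are declared $\mathsf{NIWI}$, not $\mathsf{NIZK}$: they lack the auxiliary variable $g'$ (present in $\psi$ and $\delta_{\sigma_j}$) that lets one trivially satisfy a pairing-product equation with target $\neq 1$ under the simulation trapdoor. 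So your phrase ``uses the $\mathsf{GS}_R$ ZK trapdoor to produce a simulated $\varphi$'' is not directly justified by the protocol's proof system; you would either have to argue separately that these particular one-sided equations admit simulation, or---more simply---do what the paper does and supply genuine dummy witnesses, letting WI (rather than ZK) carry the indistinguishability. Since you already write ``selects an arbitrary valid $\phi_{\sigma_j}\in\mathsf{cDB}$'' and ``picks a dummy randomized list $L'$'', you in fact have such witnesses at hand, so the fix is cosmetic. With that adjustment your proof and the paper's are the same.
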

\begin{proof}
The ${\sf AOT\mhyphen HAP}$ is run between a sender $S$ and multiple receivers. We simulate an interaction between $S$ and a receiver $R$ with an attribute list $L = [L_1, L_2, \ldots, L_n]$, $L_{\ell} \in S_{\ell}$, $n$ is the total number of attributes, $\ell=1, 2, \ldots, n$, (see section \ref{access structure} for detail definition of $L_{\ell}, S_{\ell}$). The adversarial model adapted in this paper is static corruption model in which adversary decides which party to corrupt before the protocol starts. The corrupted parties remain corrupt, and honest parties remain honest throughout. Therefore, the case of malicious sender and malicious receiver are addressed separately. In both cases, an ideal world adversary $\mathcal{A'}$ corresponding to a real world adversary $\mathcal{A}$ is constructed such that no environment machine or distinguisher $\mathcal{Z}$ can distinguish with non-negligible probability, its interactions with $\mathcal{A}$ and parties running the protocol $\Pi$ in the real world from its interaction with $\mathcal{A'}$ and ideal functionality $\mathcal{F}_{\sf AOT\mhyphen HAP}^{N \times 1}$ in the ideal world. The protocol $\Pi$ is said to be secure if the real world is as secure as the ideal world. Since the ideal world consists of incorruptible trusted parties, it is impossible for an adversary to attack the ideal world. Therefore, all adversarial attack must fail in real world too. Technically, for each case we show ${ \sf IDEAL}_{\mathcal{F}_{\sf AOT\mhyphen HAP}^{N \times 1}, \mathcal{A'}, \mathcal{Z}} \stackrel{c}{\approx}
{\sf REAL}_{\Pi, \mathcal{A}, \mathcal{Z}},$
where ${ \sf IDEAL}_{\mathcal{F}_{\sf AOT\mhyphen HAP}^{N \times 1}, \mathcal{A'}, \mathcal{Z}},
{\sf REAL}_{\Pi, \mathcal{A}, \mathcal{Z}}$ are as defined in section \ref{ucmodel}.

\noindent The security proof is presented using sequence of hybrid
games. Let ${\sf Pr}[{\sf Game}~ i]$ be the probability that
$\mathcal{Z}$ distinguishes the transcript of ${\sf Game} ~i$ from the real execution.

\noindent \textbf{(a) Security Against Malicious Receiver $R$}.
In this case, the real world adversary $\mathcal{A}$ corrupts the receiver $R$ before the execution of the protocol. The ideal world adversary $\mathcal{A'}$ simulates the honest sender $S$ by invoking a copy of $\mathcal{A}$ as follows.

\noindent $\bf{\underline{{\sf Game} ~0}}$: This game corresponds to the real world protocol
interaction in which the receiver interacts with the honest sender $S$.
So, ${\sf Pr}[{\sf Game}~ 0] = 0$.

\noindent $\bf{\underline{{\sf Game} ~1}}$: This game is the same
as ${\sf Game} ~0$ except that $\mathcal{A'}$ simulates ${\sf crs}$. The adversary $\mathcal{A'}$ simulates ${\sf crs} = ({\sf params}, {\sf GS}_S$, ${\sf GS}_R, \{\{A_{\ell, t}, B_{\ell, t}\}_{1 \leq t \leq n_{\ell}}\}_{1 \leq \ell \leq n})$ exactly as in the above game except that ${\sf GS}_S$ is in witness indistinguishability setting as explained below. The adversary $\mathcal{A'}$, first generates ${\sf params} = (p, \mathbb{G}_1, \mathbb{G}_2, \mathbb{G}_{T}, e, g_1, g_2) \leftarrow {\sf BilinearSetup}(1^{\rho})$ by invoking algorithm ${\sf BilinearSetup}$ discussed in section \ref{preliminaries}. The components $\{\{A_{\ell, t} = g_{1}^{a_{\ell,t}}, B_{\ell, t} = g_{2}^{a_{\ell,t}}\}_{1 \leq t \leq n_{\ell}}\}_{1 \leq \ell \leq n}$ are generated exactly as in ${\sf Game} ~0$. To generate ${\sf GS}_S$ in witness indistinguishability setting and ${\sf GS}_R$ in perfectly sound setting, the algorithm chooses $\xi_1, \xi_2, \widetilde{a}, \widetilde{b}, \widehat{\xi}_1, \widehat{\xi}_2, \widehat{a}, \widehat{b} \xleftarrow{\$} \mathbb{Z}_{p}$ and sets $u_1 = (g_{1}^{\widetilde{a}}, 1, g_{1})$, $u_2 = (1, g_{1}^{\widetilde{b}}, g_{1}), u_3 = u_{1}^{\xi_1}u_{2}^{\xi_2}(1, 1, g_{1}) = (g_{1}^{\widetilde{a}\xi_1}, g_{1}^{\widetilde{b}\xi_2}, g_{1}^{\xi_1+\xi_2+1})$, $v_1 = (g_{2}^{\widetilde{a}}, 1, g_{2}), v_2 = (1, g_{2}^{\widetilde{b}}, g_{2}), v_3 = v_{1}^{\xi_1}v_{2}^{\xi_2}(1, 1, g_{2}) = (g_{2}^{\widetilde{a}\xi_1}$, $g_{2}^{\widetilde{b}\xi_2}, g_{2}^{\xi_1+\xi_2+1})$,
$\widehat{u}_1 = (g_{1}^{\widehat{a}}, 1, g_{1}), \widehat{u}_2 = (1, g_{1}^{\widehat{b}}, g_{1}), \widehat{u}_3 = \widehat{u}_{1}^{\widehat{\xi}_1}\widehat{u}_{2}^{\widehat{\xi}_2}$ $= (g_{1}^{\widehat{a}\widehat{\xi}_1}, g_{1}^{\widehat{b}\widehat{\xi}_2}, g_{1}^{\widehat{\xi}_1+\widehat{\xi}_2})$, $\widehat{v}_1 = (g_{2}^{\widehat{a}}, 1, g_{2})$, $\widehat{v}_2 = (1, g_{2}^{\widehat{b}}, g_{2}), \widehat{v}_3 = \widehat{v}_{1}^{\widehat{\xi}_1}\widehat{v}_{2}^{\widehat{\xi}_2} = (g_{2}^{\widehat{a}\widehat{\xi}_1}, g_{2}^{\widehat{b}\widehat{\xi}_2}, g_{2}^{\widehat{\xi}_1+\widehat{\xi}_2})$, ${\sf GS}_S = (u_1, u_2, u_3, v_1, v_2, v_3)$, ${\sf GS}_R = (\widehat{u}_1, \widehat{u}_2, \widehat{u}_3, \widehat{v}_1, \widehat{v}_2, \widehat{v}_3)$, ${\sf \widetilde{t}_{sim}} = (\widetilde{a}, \widetilde{b}, \xi_1, \xi_2)$, ${\sf \widehat{t}_{ext}} = (\widehat{a}, \widehat{b}$, $\widehat{\xi}_1, \widehat{\xi}_2)$. The adversary $\mathcal{A'}$ distributes ${\sf crs} = ({\sf params}, {\sf GS}_S$, ${\sf GS}_R, \{\{A_{\ell, t}$, $B_{\ell, t}\}_{1 \leq t \leq n_{\ell}}\}_{1 \leq \ell \leq n})$ to all parties and keeps ${\sf \widetilde{t}_{sim}}$, ${\sf \widehat{t}_{ext}}$, $\{\{a_{\ell, t}\}_{1 \leq t \leq n_{\ell}}\}_{1 \leq \ell \leq n}$ hidden.

\noindent The common reference string ${\sf crs}$ simulated by $\mathcal{A'}$ and ${\sf crs}$ generated by ${\sf CrsSetup}$ in actual protocol run are computationally
indistinguishable by Theorem \ref{crs} in section \ref{noninteractive}. Therefore, there exists a negligible function
$\epsilon_{1}(\rho)$ such that $|{\sf Pr}[{\sf Game}~ 1] - {\sf
Pr}[{\sf Game}~ 0]| \leq \epsilon_{1}(\rho)$.

\noindent $\bf{\underline{{\sf Game} ~2}}$: This game is exactly
the same as ${\sf Game} ~1$ except that $\mathcal{A'}$ extracts attribute list $L = [v_{1, t_1}, v_{2, t_2}, \ldots, v_{n, t_n}]$ from $L' = \{R_{\ell}\}_{1\leq\ell \leq n}, ~\varphi$ sent by $\mathcal{A}$ in ${\sf Issue}$ phase as follows. The proof $\varphi = {\sf NIWI}_{{\sf GS}_R}\{(\{A_{\ell, t_{\ell}}, T_{\ell}\}_{1\leq\ell \leq n})~|~
e(g_1, \prod_{\ell=1}^{n}R_{\ell})$ $= e(\prod_{\ell=1}^{n}A_{\ell,t}, g_2)e(\prod_{\ell=1}^{n}T_{\ell}, g_2)\}$
was constructed by $\mathcal{A}$ which consists of commitments to $\{A_{\ell, t_{\ell}}, T_{\ell}\}_{1\leq\ell \leq n}$  and proof components of pairing equation $e(g_1, \prod_{\ell=1}^{n}R_{\ell}) = e(\prod_{\ell=1}^{n}A_{\ell,t}, g_2)$ $e(\prod_{\ell=1}^{n}T_{\ell}, g_2)$ using ${\sf GS}_R = (\widehat{u}_1, \widehat{u}_2, \widehat{u}_3, \widehat{v}_1, \widehat{v}_2$, $\widehat{v}_3)$ extracted from ${\sf crs}$, simulated by $\mathcal{A'}$ in ${\sf Game} ~1$. The adversary $\mathcal{A'}$ uses trapdoor ${\sf \widehat{t}_{ext}} = (\widehat{a}, \widehat{b}, \widehat{\xi}_1, \widehat{\xi}_2)$ to extract $\{A_{\ell, t_{\ell}}, T_{\ell}\}_{1\leq\ell \leq n}$ from their respective commitments as explained below.
For instance, let
${\sf Com}(A_{\ell, t_{\ell}}) = \mu_1(A_{\ell, t_{\ell}})\widehat{u}_{1}^{\widehat{r}_1}\widehat{u}_{2}^{\widehat{r}_2}\widehat{u}_{3}^{\widehat{r}_3} =
(X$ $= g_{1}^{\widehat{a}(\widehat{r}_{1} + \widehat{\xi}_{1}\widehat{r}_{3})}$, $Y = g_{1}^{\widehat{b}(\widehat{r}_{2} + \widehat{\xi}_{2}\widehat{r}_{3})},  Z = A_{\ell, t_{\ell}}\cdot g_{1}^{\widehat{r}_{1} + \widehat{r}_{2} + \widehat{r}_{3}(\widehat{\xi}_{1} + \widehat{\xi}_{2})})$, where $\widehat{r}_1, \widehat{r}_2, \widehat{r}_3$ are random values from $\mathbb{Z}_p$ which were used to generate ${\sf Com}(A_{\ell, t_{\ell}})$. The adversary $\mathcal{A'}$ computes
$$\frac{Z}{X^{\frac{1}{\widehat{a}}}Y^{\frac{1}{\widehat{b}}}} = \frac{A_{\ell, t_{\ell}}\cdot g_{1}^{\widehat{r}_{1} + \widehat{r}_{2} + \widehat{r}_{3}(\widehat{\xi}_{1} + \widehat{\xi}_{2})}}{(g_{1}^{\widehat{a}(\widehat{r}_{1} + \widehat{\xi}_{1}\widehat{r}_{3})})^{\frac{1}{\widehat{a}}}(g_{1}^{\widehat{b}(\widehat{r}_{2} + \widehat{\xi}_{2}\widehat{r}_{3})})^{\frac{1}{\widehat{b}}}} = A_{\ell, t_{\ell}}.$$
Similarly, $\mathcal{A'}$ extracts all the witnesses $\{A_{\ell, t_{\ell}}, T_{\ell}\}_{1 \leq \ell \leq n}$. Let $L$ be the set of attribute values which is initially empty. The adversary $\mathcal{A'}$ checks whether $A_{\zeta, \varsigma} = A_{\ell, t_{\ell}}$ for $\varsigma = 1$ to $n_{\zeta}$, $\zeta = 1$ to $n$. If yes, set $L = L \cup \{v_{\ell, t_{\ell}}\}$. In this way, $\mathcal{A'}$ recovers $L = [v_{1, t_1}, v_{2, t_2}, \ldots, v_{n, t_n}]$ and simulates $d_0$, $\{d'_{\ell, 1}, d_{\ell, 2}, d_{\ell, 3}\}_{1\leq\ell \leq n}$ exactly as in above game.

\noindent The randomized attribute secret key ${\sf ASK'}$ generated in ${\sf Game} ~2$ by $\mathcal{A'}$ is distributed identically to ${\sf ASK'}$ honestly generated by the sender $S$. Hence, $|{\sf Pr}[{\sf Game}~
2] - {\sf Pr}[{\sf Game}~ 1]| = 0$.

\noindent $\bf{\underline{{\sf Game} ~3}}$: This game is exactly
the same as ${\sf Game} ~2$ except that $\mathcal{A'}$ extracts the index $\sigma_j$ from the proof $\pi_{\sigma_j} = {\sf NIWI}_{{\sf GS}_R}\{(c_{\sigma_j}^{(1)}, \prod_{\ell=1}^{n}c_{\sigma_j, \ell}^{(4)}, g_{2}^{v_{\sigma_j}}, c_{\sigma_j}^{(2)})~|~
~e(\prod_{\ell=1}^{n}c_{\sigma_j, \ell}^{(4)}\cdot y,c_{\sigma_j}^{(1)}) = e(g_1, g_2) \wedge
e({\sf Req}_{\sigma_j}, g_2) = e(c_{\sigma_j}^{(2)}, g_2)e(B, g_{2}^{v_{\sigma_j}})\}$ by extracting witnesses ${\sf wit}_1 = c_{\sigma_j}^{(1)}$, ${\sf wit}_2 = \prod_{\ell=1}^{n}c_{\sigma_j, \ell}^{(4)}$, ${\sf wit}_3 = g_{2}^{v_{\sigma_j}}$, ${\sf wit}_4 = c_{\sigma_j}^{(2)}$ using ${\sf \widehat{t}_{ext}}$ exactly as $\mathcal{A'}$ has extracted $A_{\ell, t_{\ell}}$ in ${\sf Game} ~2$. The adversary $\mathcal{A'}$ checks whether $c_{i}^{(1)} = {\sf wit}_1$, $\prod_{\ell=1}^{n}c_{i, \ell}^{(4)} = {\sf wit}_2$, $c_{i}^{(2)} = {\sf wit}_4$ for $i =1, 2, \ldots, N$. Let $\sigma_j$ be the matching index. The adversary $\mathcal{A'}$ sends the message $({\sf transfer}, \sigma_j)$ to $\mathcal{F}_{\sf AOT\mhyphen HAP}^{N \times 1}$. The ideal functionality returns $m_{\sigma_j}$ to $\mathcal{A'}$ if $L$ satisfies the access policy $W_{\sigma_j}$ associated with $m_{\sigma_j}$, otherwise, $\bot$ is given to $\mathcal{A'}$. If no matching index $\sigma_j$ found and the proof $\pi_{\sigma_j}$ is correct, then this means that $\sigma_j \notin \{1, 2, \ldots, N\}$ and $\pi_{\sigma_j}$ is constructed by $\mathcal{A}$ for the ciphertext $\phi_{\sigma_j} = (c_{\sigma_j}^{(1)}, c_{\sigma_j}^{(2)}, c_{\sigma_j}^{(3)}, c_{\sigma_j}^{(4)}, c_{\sigma_j}^{(5)})$. Eventually, this shows that $\mathcal{A}$ is able to construct a valid BB signature $c_{\sigma_j}^{(1)}$,  thereby $\mathcal{A}$ outputs
$c_{\sigma_j}^{(1)}$ as a forgery contradicting the fact that the BB signature is
unforgeable assuming $q$-SDH problem
is hard \cite{boneh2004shortsignature}. Therefore, there exists a
negligible function $\epsilon_{3}(\rho)$ such that $|{\sf Pr}[{\sf
Game}~ 3] - {\sf Pr}[{\sf Game}~ 2]| \leq \epsilon_{3}(\rho)$.

\noindent $\bf{\underline{{\sf Game} ~4}}$: This game is the same as ${\sf Game}
~3$ except that  the response ${\sf Res}_{\sigma_j}$ and
proof $\delta_{\sigma_j}$ are simulated by $\mathcal{A'}$, for each transfer phase $j = 1, 2, \ldots, k$. To simulate response ${\sf Res}'_{\sigma_j}$,  $\mathcal{A'}$ first extracts $h^{v_{\sigma_j}}$ from its commitment $({\sf Com'}(h))^{v_{\sigma_j}}$ using ${\sf \widehat{t}_{ext}}$ as done in ${\sf Game}
~2$. The commitment to $h^{v_{\sigma_j}}$ is generated by $\mathcal{A}$ by raising power $v_{\sigma_j}$ to the commitment of $h$ given by $S$ in initialization phase as
\begin{align*}
({\sf Com'}(h))^{v_{\sigma_j}} &= (g_{2}^{\widehat{a}(\widehat{r}_1 + \widehat{r}_3\widehat{\xi}_1)}, g_{2}^{\widehat{b}(\widehat{r}_2 + \widehat{r}_3\widehat{\xi}_2)}, hg^{\widehat{r}_1+\widehat{r}_2+\widehat{r}_3(\widehat{\xi}_1+\widehat{\xi}_2)})^{v_{\sigma_j}}\\
&= (g_{2}^{\widehat{a}v_{\sigma_j}(\widehat{r}_1 + \widehat{r}_3\widehat{\xi}_1)}, g_{2}^{\widehat{b}v_{\sigma_j}(\widehat{r}_2 + \widehat{r}_3\widehat{\xi}_2)},
h^{v_{\sigma_j}}g^{v_{\sigma_j}(\widehat{r}_1+\widehat{r}_2+\widehat{r}_3(\widehat{\xi}_1+\widehat{\xi}_2))})\\
&= (g_{2}^{\widehat{a}(\widetilde{r}_1 + \widetilde{r}_3\widehat{\xi}_1)}, g_{2}^{\widehat{b}(\widetilde{r}_2 + \widetilde{r}_3\widehat{\xi}_2)}, h^{v_{\sigma_j}}g^{\widetilde{r}_1+\widetilde{r}_2+\widetilde{r}_3(\widehat{\xi}_1+\widehat{\xi}_2)}) = {\sf Com'}(h^{v_{\sigma_j}}),
\end{align*}
where $\widetilde{r}_1 = v_{\sigma_j}\widehat{r}_1, \widetilde{r}_2 = v_{\sigma_j}\widehat{r}_2, \widetilde{r}_3 = v_{\sigma_j}\widehat{r}_3$, $\widehat{r}_1, \widehat{r}_2, \widehat{r}_3 \xleftarrow{\$} \mathbb{Z}_{p}$. The adversary $\mathcal{A'}$ sets $\widetilde{c}_{\sigma_j, \ell}^{(5)} = c_{\sigma_j, \ell, t_{\ell}}^{(5)}$ if $v_{\ell, t_{\ell}} \in L$ for $\ell =1, 2, \ldots, n$ and
$${\sf Res}'_{\sigma_j} = \frac{c_{\sigma_j}^{(3)}\cdot \prod_{\ell=1}^{n} e(c_{\sigma_j, \ell}^{(4)}, d_{\ell, 1})\cdot e(B, h^{v_{\sigma_j}})}{e(c_{\sigma_j}^{(2)}, d_0)\cdot \prod_{\ell=1}^{n} e(\widetilde{c}_{\sigma_j, \ell}^{(5)}, d_{\ell, 3})\cdot{m_{\sigma_j}}},$$
%
%$${\sf Res}'_{\sigma_j} = \frac{c_{\sigma_j}^{(3)}\cdot e(B, h^{v_{\sigma_j}})}{m_{\sigma_j}\cdot e\left(\prod_{\ell=1}^{n}g_{1}^{s_{\sigma_j, \ell}}, g_{2}^{w}\right)},$$
where $L = [v_{1, t_1}, v_{2, t_2}, \ldots, v_{n, t_n}]$ and $m_{\sigma_j}$ are extracted in ${\sf Game} ~2$ and ${\sf Game} ~3$, respectively. The proof
\begin{gather*}
\delta_{\sigma_j} = {\sf NIZK}_{{\sf GS}_S}\{(a_{1, \sigma_j}, a_{2, \sigma_j}, a_{3, \sigma_j}, a_{4, \sigma_j})~|~
e({\sf Req}_{\sigma_j}^{-1}, a_{1, \sigma_j})e(a_{2, \sigma_j}, a_{3, \sigma_j}) = 1 \wedge\\
e(B^{-1}, a_{1, \sigma_j})e(a_{4, \sigma_j}, a_{3, \sigma_j}) = 1 \wedge
e(g_1, a_{3, \sigma_j}) = e(g_1, g_2)\},
\end{gather*}
where $a_{1, \sigma_j} = h$, $a_{2, \sigma_j} = {\sf Req}_{\sigma_j}^{\gamma}$, $a_{3, \sigma_j} = g_2$, $a_{4, \sigma_j} = B^{\gamma}$. To simulate the proof $\delta_{\sigma_j}$, $\mathcal{A'}$ sets $a_{1, \sigma_j} = a_{2, \sigma_j} = a_{3, \sigma_j} = a_{4, \sigma_j} = g_{2}^{0}$ and generate commitment to all these values. As $\mathcal{A'}$ knows the simulation trapdoor ${\sf \widetilde{t}_{sim}}$, it can open the commitment to any value of its choice as explained in Example \ref{exmp1} in section \ref{noninteractive}. The adversary $\mathcal{A'}$ opens commitment of $a_{3, \sigma_j}$ to $g_{2}^{0}$ in first and 2nd equation of $\delta_{\sigma_j}$ and to $g_{2}^{1}$ in 3rd equation of of $\delta_{\sigma_j}$. Thus, all the equations in $\delta_{\sigma_j}$ are simulated by $\mathcal{A'}$.

\noindent \emph{Claim 1. Under the DLIN assumption, the response
${\sf Res}_{\sigma_j}$ and the proof $\delta_{\sigma_j}$ honestly generated
by the sender $S$ are computationally indistinguishable from the
response ${\sf Res}'_{\sigma_j}$ and proof $\delta'_{\sigma_j}$ simulated by the simulator
$\mathcal{A'}$.}

\noindent \emph{Proof of Claim 1.} The simulated response ${\sf Res}'_{\sigma_j}$ is
\vspace{-.3cm}
\begin{eqnarray*}
{\sf Res}'_{\sigma_j} &=& \frac{c_{\sigma_j}^{(3)}\cdot \prod_{\ell=1}^{n} e(c_{\sigma_j, \ell}^{(4)}, d_{\ell, 1})\cdot e(B, h^{v_{\sigma_j}})}{e(c_{\sigma_j}^{(2)}, d_0)\cdot \prod_{\ell=1}^{n} e(\widetilde{c}_{\sigma_j, \ell}^{(5)}, d_{\ell, 3})\cdot{m_{\sigma_j}}}\\
&=& \frac{m_{\sigma_j}\cdot Y^{r_{\sigma_j}}\cdot e(B^{r_{\sigma_j}}, h)\cdot e(B, h^{v_{\sigma_j}})}{m_{\sigma_j}\cdot Y^{r_{\sigma_j}}}\\
%&=& \frac{m_{\sigma_j}\cdot e(g_1, g_{2}^{w})^{r_{\sigma_j}}\cdot e(B^{r_{\sigma_j}}, h)\cdot e(B, h^{v_{\sigma_j}})}{m_{\sigma_j}\cdot e\left(g_{1}^{r_{\sigma_j}}, g_{2}^{w}\right)} \\
&=& e(B, h)^{r_{\sigma_j} + v_{\sigma_j}},
\end{eqnarray*}
as $\frac{\prod_{\ell=1}^{n} e(c_{\sigma_j, \ell}^{(4)}, d_{\ell, 1})}{e(c_{\sigma_j}^{(2)}, d_0)\cdot \prod_{\ell=1}^{n} e(\widetilde{c}_{\sigma_j, \ell}^{(5)}, d_{\ell, 3})} = \frac{1}{Y^{r_{\sigma_j}}}$and, the honestly generated response ${\sf Res}_{\sigma_j}$ is
$${\sf Res}_{\sigma_j}  = e({\sf Req}_{\sigma_j}, h) = e(B^{r_{\sigma_j} + v_{\sigma_j}}, h) = e(B, h)^{r_{\sigma_j} + v_{\sigma_j}}.$$
The simulated response ${\sf Res}'_{\sigma_j}$ is distributed identically to honestly generated response ${\sf Res}_{\sigma_j}$.

\noindent As Groth-Sahai proofs are composable ${\sf NIZK}$  by Theorem \ref{niwi}, the simulated
proof $\delta'_{\sigma_j}$ is
computationally indistinguishable from the honestly generated
proof $\delta_{\sigma_j}$. Therefore,
we have $|{\sf Pr}[{\sf Game}~ 4] - {\sf Pr}[{\sf Game}~ 3]| \leq \epsilon_{4}(\rho)$, where $\epsilon_{4}(\rho)$
is a negligible function.

\noindent $\bf{\underline{{\sf Game} ~5}}$: This game is the same
as ${\sf Game} ~4$ except that $\mathcal{A'}$ replaces perfect database ${\sf DB} = \{(m_i, W_i)\}_{1 \leq i \leq N}$ by a random database ${\sf DB'} = \{(\widehat{m}_i, \widehat{W}_i)\}_{1 \leq i \leq N}$. The adversary $\mathcal{A'}$ generates $({\sf pk'},  \psi', {\sf cDB'})$, computes response ${\sf Res}_{\sigma_j}$, proof $\delta_{\sigma_j}$. But,
in this game the response ${\sf Res}_{\sigma_j}$ is computed on an invalid statement. The proof $\psi'$ is simulated exactly in the same way as $\mathcal{A'}$ simulates $\delta_{\sigma_j}$ in ${\sf Game} ~4$. In ${\sf Game} ~4$, ${\sf cDB}$ is the encryption of perfect database ${\sf DB}$ whereas in ${\sf Game} ~5$, ${\sf cDB'}$ is the encryption of random database ${\sf DB'}$. If the distinguisher $\mathcal{Z}$ can distinguish between the transcript of ${\sf Game} ~5$ from the transcript of ${\sf Game} ~4$, we can construct a solver for DLIN and DBDH using $\mathcal{Z}$ as a subroutine -- a contradiction as CP-ABE of \cite{nishide2008attribute} is semantically secure assuming the hardness of DLIN and DBDH problems.
Therefore, ${\sf Game}~ 4$ and ${\sf Game}~
5$ are computationally indistinguishable and $|{\sf
Pr}[{\sf Game}~ 5] - {\sf Pr}[{\sf Game}~ 4]| \leq
\epsilon_{5}(\rho)$, where $\epsilon_{5}(\rho)$ is a negligible
function.

\noindent Thus ${\sf Game} ~5$ is the ideal world interaction whereas ${\sf Game} ~0$ is the real
world interaction. Now\\
$
|{\sf Pr}[{\sf Game}~ 5] - [{\sf Game}~ 0]| \leq \sum_{t=1}^{5}|{\sf Pr}[{\sf Game}~ t] - [{\sf Game}~ (t-1)]|\leq   \epsilon_{6}(\rho),
$
where $\epsilon_{6}(\rho) =  \sum_{t=1}^{5}\epsilon_{t}(\rho)$ is a
negligible function. Hence, ${\sf IDEAL}_{\mathcal{F}_{\sf AOT\mhyphen HAP}^{N
\times 1}, \mathcal{A'}, \mathcal{Z}} \stackrel{c}{\approx} {\sf
REAL}_{\Pi, \mathcal{A}, \mathcal{Z}}$.

\noindent \textbf{(b) Security Against Malicious Sender $S$}.
In this case, $\mathcal{A}$ corrupts the sender $S$ and delivers all messages on behalf of $S$. The adversary $\mathcal{A'}$ simulates the actions of $R$ as follows.

\noindent $\bf{\underline{{\sf Game} ~0}}$: This game corresponds to the real world protocol
interaction in which $S$ communicates with honest $R$.
So, ${\sf Pr}[{\sf Game}~ 0] = 0$.

\noindent $\bf{\underline{{\sf Game} ~1}}$: This game is the same
as ${\sf Game} ~0$ except that $\mathcal{A'}$ simulates ${\sf crs} = ({\sf params}, {\sf GS}_S, {\sf GS}_R$, $\{\{A_{\ell, t}, B_{\ell, t}\}_{1 \leq t \leq n_{\ell}}\}_{1 \leq \ell \leq n})$. The adversary $\mathcal{A'}$, first generates ${\sf params} = (p, \mathbb{G}_1, \mathbb{G}_2, \mathbb{G}_{T}, e$, $g_1, g_2)$ by invoking algorithm ${\sf BilinearSetup}$ discussed in section \ref{preliminaries}. The components $\{A_{\ell, t} = g_{1}^{a_{\ell,t}}, B_{\ell, t} = g_{2}^{a_{\ell,t}}\}_{1 \leq t \leq n_{\ell}, 1 \leq \ell \leq n}$, are generated exactly as in ${\sf Game} ~0$. To generate ${\sf GS}_S$  and ${\sf GS}_R$ in perfectly sound setting, the algorithm, takes $\xi_1, \xi_2, \widetilde{a}, \widetilde{b}$, $\widehat{\xi}_1, \widehat{\xi}_2, \widehat{a}, \widehat{b}$ $\xleftarrow{\$} \mathbb{Z}_{p}$ and sets $u_1 = (g_{1}^{\widetilde{a}}, 1, g_1), u_2 = (1, g_{1}^{\widetilde{b}}, g_1), u_3 = (g_{1}^{\widetilde{a}\xi_1}$, $g_{1}^{\widetilde{b}\xi_2}, g_{1}^{\xi_1+\xi_2})$, $v_1 = (g_{2}^{\widetilde{a}}, 1, g_2), v_2 = (1, g_{2}^{\widetilde{b}}, g_2), v_3 = (g_{2}^{\widetilde{a}\xi_1}$, $g_{2}^{\widetilde{b}\xi_2}, g_{2}^{\xi_1+\xi_2})$,
$\widehat{u}_1 = (g_{1}^{\widehat{a}}, 1, g_1), \widehat{u}_2 = (1, g_{1}^{\widehat{b}}, g_1)$, $\widehat{u}_3 = (g_{1}^{\widehat{a}\widehat{\xi}_1}$, $g_{1}^{\widehat{b}\widehat{\xi}_2}, g_{1}^{\widehat{\xi}_1+\widehat{\xi}_2})$, $\widehat{v}_1 = (g_{2}^{\widehat{a}}, 1, g_2), \widehat{v}_2 = (1, g_{2}^{\widehat{b}}, g_2), \widehat{v}_3 = (g_{2}^{\widehat{a}\widehat{\xi}_1}$, $g_{2}^{\widehat{b}\widehat{\xi}_2}, g_{2}^{\widehat{\xi}_1+\widehat{\xi}_2})$, ${\sf GS}_S = (u_1, u_2$, $u_3, v_1, v_2, v_3), {\sf GS}_R = (\widehat{u}_1, \widehat{u}_2, \widehat{u}_3$, $\widehat{v}_1, \widehat{v}_2, \widehat{v}_3)$, ${\sf \widetilde{t}_{ext}} = (\widetilde{a}, \widetilde{b}, \xi_1, \xi_2)$, ${\sf \widehat{t}_{ext}} = (\widehat{a}, \widehat{b}, \widehat{\xi}_1, \widehat{\xi}_2)$. The adversary $\mathcal{A'}$ distributes ${\sf crs}$ to all parties and keeps ${\sf \widetilde{t}_{ext}}$, ${\sf \widehat{t}_{ext}}$, $\{\{a_{\ell, t}\}_{1 \leq t \leq n_{\ell}}\}_{1 \leq \ell \leq n}$ secret to itself.

\noindent The common reference string ${\sf crs}$ generated in ${\sf Game} ~1$ by $\mathcal{A'}$ is distributed identically to ${\sf crs}$ generated in ${\sf Game} ~0$ by ${\sf CrsSetup}$. Hence, $|{\sf Pr}[{\sf Game}~
1] - {\sf Pr}[{\sf Game}~ 0]| = 0$.

\noindent $\bf{\underline{{\sf Game} ~2}}$: This game is the same
as ${\sf Game} ~1$ except that $\mathcal{A'}$ upon receiving ${\sf pk}, \psi, {\sf cDB}$ from $\mathcal{A}$ extracts $h, g_{2}^{w}, g_{1}^{\alpha}, g_{2}^{\alpha}$ using ${\sf \widetilde{t}_{ext}}$ in the same way as $\mathcal{A'}$ extracts witnesses in ${\sf Game} ~2$ of Case (a). Using extracted $h, g_{2}^{w}, g_{2}^{\alpha}$, $\mathcal{A'}$ extracts $(m_i, W_i)$ from ciphertext $\phi_i$ for each $i =1, 2, \ldots, N$ as follows. The adversary $\mathcal{A'}$ extracts $c_{i}^{(2)} = B^{r_i}$, $ c_{i}^{(3)} = m_i\cdot Y^{r_i}\cdot e(c_{i}^{(2)}, h)$, $c_{i}^{(4)} = \{c_{i, \ell}^{(4)}\}_{1\leq \ell \leq n} =\{g_{1}^{s_{i, \ell}}\}_{1\leq \ell \leq n}$, $c_{i}^{(5)} = \{\{c_{i, \ell, t}^{(5)}\}_{1\leq t\leq n_{\ell}}\}_{1\leq \ell \leq n}$ from $\phi_i = (c_{i}^{(1)}, c_{i}^{(2)}, c_{i}^{(3)}, c_{i}^{(4)}, c_{i}^{(5)})$ and computes
\vspace{-.3cm}
\begin{eqnarray*}
\frac{c_{i}^{(3)}}{e\left(\prod_{\ell=1}^{n}c_{i, \ell}^{(4)}, g_{2}^{w}\right)e(c_{i}^{(2)}, h)} = \frac{m_i\cdot Y^{r_i}e(B^{r_i}, h)}{e\left(\prod_{\ell=1}^{n}g_{1}^{s_{i, \ell}}, g_{2}^{w}\right)e(B^{r_i}, h)}\\
= \frac{m_i\cdot Y^{r_i}e(B^{r_i}, h)}{e\left(g_{1}^{\sum_{\ell=1}^{n}s_{i, \ell}}, g_{2}^{w}\right)e(B^{r_i}, h)} = m_i
\end{eqnarray*}
as $\sum_{\ell=1}^{n}s_{i, \ell} = r_i, Y = e(g_1, g_{2}^{w})$. Let $W_i = [W_{i, 1}, W_{i, 2}$, $\ldots, W_{i, n}]$ be the access policy associated with $m_i$ which is initially empty. The adversary $\mathcal{A'}$ checks whether
$e((c_{i, \ell, t}^{(5)})^{\frac{1}{a_{\ell, t}}}$, $g_2)$ $= e(c_{i, \ell}^{(4)}, g_{2}^{\alpha})$, if yes, $W_{i, \ell} = W_{i, \ell}\cup \{v_{\ell, t}\}$ for $t = 1, 2, \ldots, n_{\ell}$, for $\ell=1, 2, \ldots, n$. In this way, $W_i$ is recovered by $\mathcal{A'}$. The adversary $\mathcal{A'}$ sends $({\sf initdb}, {\sf DB} = \{m_i, W_i\}_{1 \leq i\leq N})$ to $\mathcal{F}_{\sf AOT\mhyphen HAP}^{N \times 1}$. Upon receiving message $({\sf issue}, L)$, $\mathcal{A'}$ simulates $R$'s  side of the ${\sf Issue}$ phase. The adversary $\mathcal{A'}$ picks attributes, constructs attribute list $\widetilde{L} = [v_{1, t_1}, v_{2, t_2}, \ldots, v_{n, t_n}]$ such that $\widetilde{L}\models W_1$
and simulates $R_{\ell}$ by taking $B_{\ell, t_{\ell}}$, from ${\sf crs}$, corresponding to $v_{\ell, t_{\ell}}$ from $\widetilde{L}$ and sets $L'= \{R_{\ell}\}_{1\leq \ell \leq n}$, where $R_{\ell} = B_{\ell, t_{\ell}}\cdot g_{2}^{z_{\ell}}$. The proof $\varphi$ is simulated with the witnesses $\{A_{\ell, t_{\ell}}, T_{\ell}\}_{1\leq \ell \leq n}$, where $T_{\ell} = g_{1}^{z_{\ell}}$. The proof $\varphi$ and $L'$ is given to $\mathcal{A}$ who in turn gives randomized attribute key ${\sf ASK'} = (d_0 = g_{2}^{\frac{w+s}{\beta}}, \{d'_{\ell, 1} = g_{2}^{s}\cdot R_{\ell}^{\lambda_{\ell}}, d_{\ell, 2} = g_{2}^{\lambda_{\ell}}, d_{\ell, 3} = g_{2}^{\frac{\lambda_{\ell}}{\alpha}}\}_{1\leq\ell \leq n})$. The adversary $\mathcal{A'}$ checks whether $\frac{e(g_1, d'_{\ell, 1}\cdot d_{\ell, 2}^{-z_{\ell}})e(g_{1}^{\alpha}, d_{\ell, 3})}{e(B, d_0)e(g_{1}^{-1}, g_{2}^{w})} = e(A_{\ell, t_{\ell}}\cdot g_1, d_{\ell, 2})$ for $\ell=1, 2, \ldots, n$. If yes, a bit $b=1$ to given to $\mathcal{F}_{\sf AOT\mhyphen HAP}^{N \times 1}$ by $\mathcal{A'}$. Otherwise, $\mathcal{A'}$ gives $b=0$ to $\mathcal{F}_{\sf AOT\mhyphen HAP}^{N \times 1}$.

\noindent As Groth-Sahai proofs are composable ${\sf NIWI}$  by Theorem \ref{niwi}, the simulated $L', \varphi$ are computationally indistinguishable from the honestly generated $L', \varphi$. Therefore,
we have $|{\sf Pr}[{\sf Game}~ 2] - {\sf Pr}[{\sf Game}~ 1]| \leq \epsilon_{1}(\rho)$, where $\epsilon_{1}(\rho)$
is a negligible function.

\noindent $\bf{\underline{{\sf Game} ~3}}$: This game is the same
as ${\sf Game} ~2$ except that $\mathcal{A'}$ upon receiving the message $({\sf transfer}, {\sf \sigma}_j)$, runs the receiver $R$'s side of the ${\sf Transfer}$ phase. The adversary $\mathcal{A'}$ simulates ${\sf Req}_{\sigma_j}, \pi_{\sigma_j}$ as follows. The adversary $\mathcal{A'}$ replaces ${\sf Req}_{\sigma_j}$ by ${\sf Req}_{1} = c_{1}^{(2)}\cdot B^{v_{1}}$ and proof $\pi_{\sigma_j}$ by $\pi_{1} = {\sf NIWI}_{{\sf GS}_R}\{(c_{1}^{(1)}$, $\prod_{\ell=1}^{n}c_{1, \ell}^{(4)}, g_{2}^{v_{1}}, c_{1}^{(2)})~|~$
$e(\prod_{\ell=1}^{n}c_{1, \ell}^{(4)}\cdot y, c_{1}^{(1)}) = e(g_1, g_2) \wedge
e({\sf Req}_{1}, g_2)$ $= e(c_{1}^{(2)}, g_2)e(B, g_{2}^{v_{1}})\}$, ${\sf Pri}_{1} = H^{v_{1}}$. The adversary $\mathcal{A'}$ gives $\pi_1$ and ${\sf Req}_{1}$ to $\mathcal{A}$. Upon receiving response ${\sf Res}_1, \delta_1$ from $\mathcal{A}$, $\mathcal{A'}$ checks whether $\frac{c_{1}^{(3)}\cdot \prod_{\ell=1}^{n} e(c_{1, \ell}^{(4)}, d_{\ell, 1})\cdot H^{v_{1}}}{e(c_{1}^{(2)}, d_0)\cdot \prod_{\ell=1}^{n} e(\widetilde{c}_{1, \ell}^{(5)}, d_{\ell, 3})\cdot{\sf Res}_{1}} = m_{1}$. If yes, a bit $b=1$ to given to $\mathcal{F}_{\sf AOT\mhyphen HAP}^{N \times 1}$ by $\mathcal{A'}$. Otherwise, $\mathcal{A'}$ gives $b=0$ to $\mathcal{F}_{\sf AOT\mhyphen HAP}^{N \times 1}$.
As Groth-Sahai proofs are composable ${\sf NIWI}$  by Theorem \ref{niwi}, the simulated request ${\sf Req}_{1}$ is computationally indistinguishable from the honestly generated ${\sf Req}_{\sigma_j}$. Therefore,
we have $|{\sf Pr}[{\sf Game}~ 3] - {\sf Pr}[{\sf Game}~ 2]| \leq \epsilon_{2}(\rho)$, where $\epsilon_{2}(\rho)$
is a negligible function

\noindent Thus ${\sf Game} ~3$ is the ideal world interaction whereas ${\sf Game} ~0$ is the real
world interaction. Now\\
$
|{\sf Pr}[{\sf Game}~ 3] - [{\sf Game}~ 0]| =
\sum_{t=1}^{3}|{\sf Pr}[{\sf Game}~ t] - [{\sf Game}~ (t-1)]|,
$
where $\epsilon_{3}(\rho) =
\epsilon_{2}(\rho) + \epsilon_{1}(\rho) + 0$ is a
negligible function. Hence, ${\sf IDEAL}_{\mathcal{F}_{\sf AOT\mhyphen HAP}^{N
\times 1}, \mathcal{A'}, \mathcal{Z}} \stackrel{c}{\approx} {\sf
REAL}_{\Pi, \mathcal{A}, \mathcal{Z}}$.
\end{proof}

\section{Comparison}
We compare our proposed issuer-free adaptive oblivious transfer with hidden access policy (${\sf AOT\mhyphen HAP}$) with \cite{abe2013universally}, \cite{camenisch2012oblivious}, \cite{camenisch2011oblivious}, \cite{guleria2014adaptive}.
The subtle differences between our scheme and \cite{abe2013universally}, \cite{camenisch2012oblivious}, \cite{camenisch2011oblivious}, \cite{guleria2014adaptive} are listed below.
\begin{enumerate}
\item The schemes of \cite{camenisch2012oblivious}, \cite{camenisch2011oblivious} and \cite{guleria2014adaptive}, to the best of our knowledge, are the only existing ${\sf AOT\mhyphen HAP}$, but they are not secure in UC framework.
\item Abe {\em et al.} \cite{abe2013universally} introduced the first adaptive oblivious transfer with access policy in UC framework, but access policies are not hidden.
\item  Guleria and Dutta \cite{guleria2015universally} presented the concept of issuer-free adaptive oblivious transfer with public access policies.
\item  The schemes \cite{abe2013universally}, \cite{camenisch2012oblivious}, \cite{camenisch2011oblivious}, \cite{guleria2014adaptive} assume an issuer apart from a sender and multiple receivers in their constructions and are secure under the restriction that the issuer never colludes with a set of receivers.
\end{enumerate}

\begin{table}[htbp]
\scriptsize
\centering
\begin{tabular}{|c|c|c|c|c|c|c|}
\hline
UC Secure & \multicolumn{2}{|c|}{Pairing}&\multicolumn{3}{|c|}{Exponentiation}& Hidden \\
Schemes&\multicolumn{2}{c|}{${\sf PO}$}&\multicolumn{3}{c|}{${\sf EXP}$}& ${\sf AP}$\\
\cline{2-7}
 & ${\sf Transfer}$ & ${\sf DBSetup}$ & ${\sf Transfer}$ & ${\sf DBSetup}$ &
 ${\sf CRSG}$&   \\
\cline{1-7}
 %\cite{green2008universally}  & $\geq 207k$& $24N + 1$ & $249k$ &$20N + 13$ & 18&-- \\
% \hline
% \cite{rial2009universally}& $> 450k$ & $15N + 1$ & $223k$ &$12N + 9$ & 15 & -- \\
% \hline
% \cite{guleria2014efficient} & $147k$ & $5N+1$ & $150k $ & $17N+5$& 18& --\\
% \hline
 \cite{abe2013universally} & $(100n + 199)k$ & $(n+21)N$ & $(138n+237)k$ &$(2n+21)N + 2n+20$ & $n+26$& $\times$\\
 \hline
 \cite{guleria2015universally}& $2\displaystyle\sum_{j=1}^{k}n_{\sigma_j}+86k$ & $N + 2$ & $96k$ &$3\displaystyle\sum_{i=1}^{N}n_i+3N + 59$ & 10& $\times$\\
 \hline
 Ours & $(2n+149)k$ & $N + 2$ & $112k$ &$\leq((m+n+3)N + 112)$ & $2m+20$ &$\surd$\\
 \hline
\end{tabular}
\vspace{2mm}
 \caption{Comparison Summary of computation cost in $k$ transfer phases and initialization phase
 (${\sf PO}$ stands for number of pairing operations, ${\sf EXP}$ for number of
 exponentiation operations,
 ${\sf CRSG}$ for ${\sf crs}$ generation, and ${\sf AP}$ for access control, $n$ is the number of attributes, $m$ is total number of values which $n$ attributes can take, $N$ is the database size).}
\label{tab6}
\end{table}

\begin{table}[htbp]
\scriptsize
\centering
\begin{tabular}{|c|c|c|c|c|c|}
\hline
UC Secure & Communication &\multicolumn{2}{|c|}{Storage}& Security Assumptions& Issuer-Free \\
Schemes & &  \multicolumn{2}{c|}{} & &\\
\cline{2-4}
&Request + Response & ${\sf crs}$-Size & (${\sf cDB}$ + ${\sf pk}$)Size &-- &\\
\cline{1-6}
 %\cite{green2008universally}  & $144k$ & $14$  & $18N + 11$& SXDH, $q$-LRSW, DLIN &--\\
% \hline
% \cite{rial2009universally}& $93k$ & $23$ & $12N + 7$& HSDH, TDH, DLIN& --\\
% \hline
% \cite{guleria2014efficient} & $75k$ & 16&$12N+5$& $q$-SDH, DLIN&--\\
% \hline
 \cite{abe2013universally} & $(125+64n)k$& $n+28$ & $16N + n + 20$& SXDH, XDLIN& $\times$\\
 \hline
 \cite{guleria2015universally}& $47\mathbb{G}k$& $11\mathbb{G}$ & $\left(\displaystyle\sum_{i=1}^{N}n_i+3N + m+11\right)\mathbb{G}$&DLIN, $q$-SDH , $q$-DBDHE &$\surd$\\
 \hline
 Ours & $62k$& $2m+22$ & $\leq((m+n+3)N + 7)$&$q$-SDH, DBDH, DLIN&$\surd$ \\
 \hline
\end{tabular}
\vspace{2mm}
 \caption{Comparison summary of communication cost in $k$ transfer phases and initialization phase (${\sf cDB}$ stands for ciphertext database, ${\sf pk}$ for public key, $n$ is the number of attributes, $m$ is total number of values which $n$ attributes can take, $N$ is the database size, SXDH - symmetric external Diffie-Hellman assumption, XDLIN -external decision Linear assumption, $q$-SDH - $q$-strong Diffie-Hellman assumption, DBDH-decision bilinear Diffie-Hellman assumption, DBDHE-decision bilinear Diffie-Hellman exponent assumption, DLIN-decision Linear assumption).}
\label{tab2}
\end{table}

\noindent In contrast to \cite{abe2013universally}, \cite{camenisch2012oblivious}, \cite{camenisch2011oblivious}, \cite{guleria2014adaptive}, we introduce the \emph{first} issuer-free ${\sf AOT\mhyphen HAP}$ in UC framwork. Our ${\sf AOT\mhyphen HAP}$ is issuer-free, realizes hidden access policy and achieves UC security.  Table \ref{tab6} provides the computation cost involved in $k$ ${\sf Transfer}$ phases, in algorithms ${\sf DBSetup}$ and ${\sf CRSSetup}$. In addition, ${\sf issue}$ phase takes $20n+8$ exponentiations and $2n+15$ pairing. Table \ref{tab2} exhibits (i) communication cost in $k$ transfer phases, (ii) storage complexity of common reference string ${\sf crs}$, (iii)
storage complexity of ciphertext database and public key size and (iv) complexity assumption. As illustrated in Table \ref{tab6} and \ref{tab2}, our protocol is more efficient as compared to the only existing UC secure adaptive oblivious without hidden access policy.

\section{Conclusion}
We have proposed the first issuer-free adaptive oblivious transfer with hidden access policy (${\sf AOT\mhyphen HAP}$) in universal composable (UC) framework. In issuer-free ${\sf AOT\mhyphen HAP}$, the sender publishes the ciphertext database encrypted under associated access policies. The receiver interacts with the sender in order to decrypt the messages of its choice without revealing its identity, attribute set and choice of messages. The receiver either recovers the correct message or a garbage one depending upon whether the receiver's attribute set satisfies the access policy associated with the message.
The proposed ${\sf AOT\mhyphen HAP}$ has been proved UC secure under the hardness of $q$-strong Diffie-Hellman (SDH), decision Linear (DLIN) and decision bilinear Diffie-Hellman (DBDH) problems in the presence of malicious adversary. Moreover, the protocol is efficient as compared to the existing similar schemes.

%\bibliographystyle{splncs03}
%\bibliography{paper9}

\end{document}